\documentclass[runningheads,a4paper]{llncs}

\usepackage{amssymb}
\usepackage{graphicx}
\usepackage{url}
\usepackage{amsmath}
\usepackage{tikz}
\usetikzlibrary{calc}
\usetikzlibrary{arrows.meta,positioning}

\begin{document}

\mainmatter 


\title{Additional properties of parity based bit-counting complexity classes and hierarchies}

\titlerunning{Parity based bit-counting complexity classes and hierarchies}

\author{ Tayfun Pay}

\authorrunning{Tayfun Pay}

\urldef{\mailsa}\path|tpay@gradcenter.cuny.edu|

\institute{{ \mailsa}}



%
%

\maketitle

\begin{abstract}
We study some properties of the parity based bit-counting complexity classes ${\bf B_{|0| \oplus}P}$ and ${\bf B_{|1| \oplus}P}$. We first prove that both of these complexity classes are closed under complement and ${\bf B_{|1|\oplus}P}\subseteq {\bf B_{|0|\oplus}P}$. We then prove that ${\bf US}\subseteq {\bf P}^{{\bf B_{|1|\oplus}P}}$ and ${\bf US}\subseteq {\bf P}^{{\bf B_{|0|\oplus}P}}$. We then study the class defining characteristic functions of the parity based bit-counting complexity classes, where the one associated with ${\bf B_{|1| \oplus}P}$ produces the Prouhet–Thue–Morse sequence. We then prove that a finite contiguous block of these sequences yield the parity of the starting number and then prove that ${\bf \oplus P}\subseteq {\bf P}^{{\bf B_{|0|\oplus}P}}$ and ${\bf \oplus P}\subseteq {\bf P}^{{\bf B_{|1|\oplus}P}}$. We then use the parity based bit-counting complexity classes to define various hierarchies and show that they all contain ${\bf PH}$ and are contained in ${\bf CH}$. 
\end{abstract}

\section{Introduction}

We study some properties of parity based bit-counting complexity classes ${\bf B_{|0| \oplus}P}$ and ${\bf B_{|1| \oplus}P}$ that were defined in \cite{P26}. We first show that the parity based bit-counting complexity classes ${\bf B_{|0| \oplus}P}$ and ${\bf B_{|1| \oplus}P}$ are closed under complement. We also improve upon the containment proven in \cite{P26}, namely ${\bf B_{|1|\oplus}P}\subseteq {\bf P}^{{\bf B_{|0|\oplus}P}}$, and prove that ${\bf B_{|1|\oplus}P}\subseteq {\bf B_{|0|\oplus}P}$. We then show the relationship between the parity based bit-counting complexity classes and the complexity class ${\bf US}$ by proving that ${\bf US}\subseteq{\bf P}^{{\bf B{|1|\oplus}P}}$ and ${\bf US}\subseteq{\bf P}^{{\bf B{|0|\oplus}P}}$.

We then explore the relationship between the parity based bit-counting complexity classes ${\bf B_{|0| \oplus}P}$ and ${\bf B_{|1| \oplus}P}$ and the classical parity based complexity class ${\bf \oplus P}$. We first define the infinite sequences that correspond to the output of the class defining characteristic functions of ${\bf B_{|0| \oplus}P}$ and ${\bf B_{|1| \oplus}P}$, where the one associated with ${\bf B_{|1| \oplus}P}$ produces the Prouhet–Thue–Morse sequence. We then prove that for both of these sequences, a finite contiguous block of size four is enough to determine the parity of the first value in that contiguous block of size four. We then use this information to prove that ${\bf \oplus P}\subseteq {\bf P}^{{\bf B_{|0|\oplus}P}}$ and ${\bf \oplus P}\subseteq {\bf P}^{{\bf B_{|1|\oplus}P}}$.

We then introduce four parity based bit-counting hierarchies inspired by the polynomial hierarchy, ${\bf PH}$. The ${\bf \Sigma B_{|0|\oplus}H}$ and ${\bf \Sigma B_{|1|\oplus}H}$ hierarchies use the parity based bit-counting complexity classes ${\bf B_{|0| \oplus}P}$ and ${\bf B_{|1| \oplus}P}$ respectively, and mirror the $\Sigma$ side of the ${\bf PH}$. Whereas the ${\bf \Delta B_{|0|\oplus}H}$ and ${\bf \Delta B_{|1|\oplus}H}$ hierarchies once again use the parity based bit-counting complexity classes ${\bf B_{|0| \oplus}P}$ and ${\bf B_{|1| \oplus}P}$ respectively, and mirror the $\Delta$ part of the ${\bf PH}$. We then show that ${\bf \Sigma B_{|0|\oplus}H}$ equals ${\bf \Sigma B_{|1|\oplus}H}$ and ${\bf \Delta B_{|0|\oplus}H}$ equals ${\bf \Delta B_{|1|\oplus}H}$, when you take their unions over all levels. After that, we prove that they all contain ${\bf PH}$. 

We then define the alternating versions of these hierarchies as ${\bf Alt\Sigma B_{|0|\oplus}H}$, ${\bf Alt\Sigma B_{|1|\oplus}H}$, ${\bf Alt\Delta B_{|0|\oplus}H}$ and ${\bf Alt\Delta B_{|1|\oplus}H}$. These hierarchies have the opposite parity based bit-counting complexity class at their even valued levels. We show that ${\bf Alt\Sigma B_{|0|\oplus}H}$ equals ${\bf Alt\Sigma B_{|1|\oplus}H}$ and that ${\bf Alt\Delta B_{|0|\oplus}H}$ equals ${\bf Alt\Delta B_{|1|\oplus}H}$, when you take their unions over all levels. We then show that they also equal their non-alternating counterparts. Finally, we show that the counting hierarchy, ${\bf CH}$, contains the parity based bit-counting hierarchies. 

These results illustrate the importance of the parity based bit-counting complexity classes. As far as we know, ${\bf B_{|0|\oplus}P}$ and ${\bf B_{|1|\oplus}P}$ are the first complexity classes that simultaneously contain both ${\bf NP}$ and ${\bf CoNP}$ and that, when provided as oracles to deterministic polynomial time machines, allow every language in ${\bf US}$ and ${\bf \oplus P}$ to be decided. Furthermore, we establish the existence of well defined hierarchies between ${\bf PH}$ and ${\bf CH}$.

\section{Definitions and containments}

\subsection{Some classical complexity classes and hierarchies}

\begin{definition}\normalfont
A language $L$ is in complexity class {\bf NP}, if there exists a polynomial $p$ and a polynomial time predicate $R$ such that, for each $x$,

$x \in L \Leftrightarrow ||\{y| \ |y| = p(|x|) \wedge R(x, y)\}|| >0 $
\end{definition}

\begin{definition}\normalfont
A language $L$ is in complexity class {\bf CoNP}, if there exists a polynomial $p$ and a polynomial time predicate $R$ such that, for each $x$,

$x \in L \Leftrightarrow ||\{y| \ |y| = p(|x|) \wedge R(x, y)\}|| =0 $
\end{definition}

\begin{definition}\normalfont
A language $L$ is in complexity class {\bf US}, as defined in \cite{BG82}, if there exists a polynomial $p$ and a polynomial time predicate $R$ such that, for each $x$,

$x \in L \Leftrightarrow ||\{y| \ |y| = p(|x|) \wedge R(x, y)\}|| = 1 $
\end{definition}

\begin{definition}\normalfont
The polynomial hierarchy, denoted ${\bf PH}$ and as defined in \cite{S76}, is the union of the levels ${\bf \Sigma_k^P}$, ${\bf \Pi_k^P}$, and ${\bf \Delta_k^P}$. The zeroth levels are ${\bf \Sigma_0^P}={\bf \Pi_0^P}={\bf \Delta_0^P}={\bf P}$. For every $k\geq 0$, define ${\bf \Sigma_{k+1}^P}={\bf NP}^{{\bf \Sigma_k^P}}$, ${\bf \Pi_{k+1}^P}={\bf coNP}^{{\bf \Sigma_k^P}}$, and ${\bf \Delta_{k+1}^P}={\bf P}^{{\bf \Sigma_k^P}}$. The full polynomial hierarchy is ${\bf PH}=\bigcup_{k\geq 0}{\bf \Sigma_k^P}=\bigcup_{k\geq 0}{\bf \Pi_k^P}=\bigcup_{k\geq 0}{\bf \Delta_k^P}$.
\end{definition}

This first few levels of the ${\bf PH}$ are depicted in the picture below.

\begin{tikzpicture}[
    every node/.style={font=\small},
    classnode/.style={align=center}
]

\node[classnode] (d1) at (0,0) {$\Delta_1^{\bf P}={\bf P}$};

\node[classnode] (s1) at (-4,1) {$\Sigma_1^{\bf P}={\bf NP}$};
\node[classnode] (p1) at (4,1) {$\Pi_1^{\bf P}={\bf CoNP}$};

\node[classnode] (d2) at (0,2) {$\Delta_2^{\bf P}={\bf P}^{\Sigma_1^{\bf P}}$};

\node[classnode] (s2) at (-4,3) {$\Sigma_2^{\bf P}={\bf NP}^{\Sigma_1^{\bf P}}$};
\node[classnode] (p2) at (4,3) {$\Pi_2^{\bf P}={\bf CoNP}^{\Sigma_1^{\bf P}}$};

\node[classnode] (d3) at (0,4) {$\Delta_3^{\bf P}={\bf P}^{\Sigma_2^{\bf P}}$};

\node[classnode] (s3) at (-4,5) {$\Sigma_3^{\bf P}={\bf NP}^{\Sigma_2^{\bf P}}$};
\node[classnode] (p3) at (4,5) {$\Pi_3^{\bf P}={\bf CoNP}^{\Sigma_2^{\bf P}}$};

\node[classnode] (d4) at (0,6) {$\Delta_4^{\bf P}={\bf P}^{\Sigma_3^{\bf P}}$};

\draw[->] (d1) -- (s1);
\draw[->] (d1) -- (p1);

\draw[->] (s1) -- (d2);
\draw[->] (p1) -- (d2);

\draw[->] (d2) -- (s2);
\draw[->] (d2) -- (p2);

\draw[->] (s2) -- (d3);
\draw[->] (p2) -- (d3);

\draw[->] (d3) -- (s3);
\draw[->] (d3) -- (p3);

\draw[->] (s3) -- (d4);
\draw[->] (p3) -- (d4);

\end{tikzpicture}

\begin{definition}\normalfont
A language $L$ is in complexity class {\bf $\oplus$P}, as defined in \cite{PZ83}, if there exists a polynomial $p$ and a polynomial time predicate $R$ such that, for each $x$,

$x \in L \Leftrightarrow ||\{y| \ |y| = p(|x|) \wedge R(x, y)\}|| \not \equiv$ {\rm 0 (Mod 2)}
\end{definition}

\begin{definition}\normalfont
A language $L$ is in complexity class {\rm \bf C$_{=}$P}, as defined in \cite{S75}, if there exists a polynomial $p$ and a polynomial time predicate $R$ such that, for each $x$,
$x \in L \Leftrightarrow ||\{y| \ |y| = p(|x|) \wedge R(x, y)\}|| = 2^{p(|x|)-1} $
\end{definition}

\begin{definition}\normalfont
A language $L$ is in complexity class {\rm \bf ES}, as defined in \cite{BHR00}, if there exists a polynomial $p$ and a polynomial time predicate $R$ such that, for each $x$,

$x \in L \Leftrightarrow ||\{y| \ |y| = p(|x|) \wedge R(x, y)\}|| = 2^{t}$, where t $\in \mathbb{N}_{0} = \{0,1,2, ...\}$
\end{definition}

\begin{definition}\normalfont A language $L$ is in complexity class {\bf MNS}, as defined in \cite{CP18}, if there exists polynomial $p$ and a polynomial time predicate $R$ such that, for each $x$,

$x \in L \Leftrightarrow ||\{y| \ |y| = p(|x|) \wedge R(x, y)\}|| = 2^{t}-1$, where t $\in \mathbb{N}_{>0} = \{1,2,3, ...\}$
\end{definition}

\begin{definition}\normalfont
A language $L$ is in complexity class {\rm \bf PP}, as defined in \cite{S75}, if there exists a polynomial $p$ and a polynomial time predicate $R$ such that, for each $x$,

$x \in L \Leftrightarrow ||\{y| \ |y| = p(|x|) \wedge R(x, y)\}|| > 2^{p(|x|)-1} $

\end{definition}

\begin{definition}\normalfont
The counting hierarchy, denoted ${\bf CH}$ and as defined in \cite{W86}, is the union of the levels ${\bf C_kP}$. The zeroth level is ${\bf C_0P}={\bf P}$. For every $k\geq 0$, the next level is ${\bf C_{k+1}P}={\bf PP}^{{\bf C_kP}}$, where ${\bf PP}^{{\bf C_kP}}$ denotes the class of languages decidable by a ${\bf PP}$ machine with oracle access to a language in ${\bf C_kP}$. The full counting hierarchy is ${\bf CH}=\bigcup_{k\geq 0}{\bf C_kP}$. 
\end{definition}

The first few levels of the ${\bf CH}$ are ${\bf C_0P}={\bf P}$, ${\bf C_1P}={\bf PP}$, ${\bf C_2P}={\bf PP}^{\bf PP}$, and ${\bf C_3P}={\bf PP}^{{\bf PP}^{\bf PP}}$.

\begin{definition}\normalfont
Functional complexity class {\rm \bf \#P}, as defined in \cite{V79}, counts the total number of accepting paths of a non-deterministic polynomial time Turing machine.

{\bf \#P} = $\{f | (\exists$ a non-deterministic polynomial time Turing machine $M) (\forall x)$ $[f (x) = \#accept_{M}(x)]\}$.
\end{definition}

It is known that ${\bf CoNP} \subseteq {\bf US} \subseteq {\bf MNS} = {\bf ES } = { \bf C_{=}P} \subseteq {\bf PP}$ and that ${\bf PH}\subseteq{\bf P}^{\bf PP} = {\bf P}^{{\bf\#P}} \subseteq {\bf CH}$, where it was shown in \cite{T89} that ${\bf PH}\subseteq{\bf P}^{\bf PP}$. Numerous other complexity classes can be found in \cite{HO02} and \cite{CP18}.

\subsection{Some bit-counting complexity classes}

The function $B_{0}$ counts the number of 0's bits and the function $B_{1}$ counts the number of 1's bits in the binary representation of the numbers in the following definitions. Note that $B_{0}(0)=1$. 

\begin{definition}\normalfont
A language $L$ is in complexity class ${\bf B_{|0| \oplus}P}$, if there exist a polynomial $p$ and a polynomial time predicate $R$ such that, for each $x$,  

$x \in L \Leftrightarrow B_{0}(||\{y| \ |y| = p(|x|) \wedge R(x, y)\}||) \not \equiv  {\rm 0 (Mod 2)}$
\end{definition}

\begin{definition}\normalfont
A language $L$ is in complexity class ${\bf B_{|1| \oplus}P}$, if there exist a polynomial $p$ and a polynomial time predicate $R$ such that, for each $x$,  

$x \in L \Leftrightarrow B_{1}(||\{y| \ |y| = p(|x|) \wedge R(x, y)\}||) \not \equiv  {\rm 0 (Mod 2)}$
\end{definition}

\begin{definition}\normalfont
A language $L$ is in complexity class ${\bf B_{|1|=0}P}$, if there exist a polynomial $p$ and a polynomial time predicate $R$ such that, for each $x$,
 
$x \in L \Leftrightarrow B_{1}(||\{y| \ |y| = p(|x|) \wedge R(x, y)\}||)$ $= 0$
\end{definition}

\begin{definition}\normalfont
A language $L$ is in complexity class ${\bf B_{|1|>0}P}$, if there exist a polynomial $p$ and a polynomial time predicate $R$ such that, for each $x$,
 
$x \in L \Leftrightarrow B_{1}(||\{y| \ |y| = p(|x|) \wedge R(x, y)\}||)$ $> 0$
\end{definition}

\subsection{Some known containments and equalities }

The following equalities follow from their respective definitions.  

-${\bf B_{|1|=0}P} ={\bf CoNP}$.

-${\bf B_{|1|>0}P} ={\bf NP}$.
\newline \newline
The following containments and equalities were proven in \cite{P26}. 

-${\bf NP}\subseteq {\bf B_{|1| \oplus}P}$

-${\bf NP}\subseteq {\bf B_{|0| \oplus}P}$

-${\bf CoNP}\subseteq {\bf B_{|1| \oplus}P}$

-${\bf CoNP}\subseteq {\bf B_{|0| \oplus}P}$

-${\bf B_{|1| \oplus}P}\subseteq {\bf P}^{\bf PP}$

-${\bf B_{|0| \oplus}P}\subseteq {\bf P}^{\bf PP}$

-${\bf B_{|1|\oplus}P}\subseteq {\bf P}^{{\bf B_{|0|\oplus}P}}$

-${\bf B_{|0|\oplus}P}\subseteq {\bf P}^{{\bf B_{|1|\oplus}P}}$

-${\bf P}^{{\bf B_{|0|\oplus}P}}={\bf P}^{{\bf B_{|1|\oplus}P}}$

\section{Some properties of parity based bit-counting complexity classes}

\subsection{Closure under complement and containment}
We show that both of the parity based bit-counting complexity classes ${\bf B_{|0|\oplus}P}$ and ${\bf B_{|1|\oplus}P}$ are closed under complement. We also show that  ${\bf B_{|1|\oplus}P}\subseteq {\bf B_{|0|\oplus}P}$, where the previous best result was ${\bf B_{|1|\oplus}P}\subseteq {\bf P}^{\bf B_{|0|\oplus}P}$.

\begin{theorem}
${\bf B_{|0|\oplus}P}$ is closed under complement.
\end{theorem}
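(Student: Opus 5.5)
The plan is to take a ${\bf B_{|0|\oplus}P}$ witness for $L$, given by a polynomial $p$ and predicate $R$ with count $m(x)=||\{y \mid |y|=p(|x|)\wedge R(x,y)\}||$. We then build a new polynomial $p'$ and predicate $R'$ whose count $g(x)$ satisfies $B_{0}(g(x)) \not\equiv B_{0}(m(x)) \pmod 2$ for every $x$. That gives $x\in \overline{L} \Leftrightarrow B_{0}(g(x))\not\equiv 0 \pmod 2$, so $\overline{L}\in{\bf B_{|0|\oplus}P}$.

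The key arithmetic choice is $g = 4m+1$. For $m\geq 1$, the binary representation of $4m+1$ is that of $m$ followed by the two bits $01$. So $B_{0}(4m+1)=B_{0}(m)+1$, and the parity of the number of zeros flips. For $m=0$ we must use the convention $B_{0}(0)=1$. Here $4m+1=1$, and $B_{0}(1)=0$, so the parity again flips. The case $m=0$ is the one place where naive choices fail. For instance, $g=2m$ fails because $B_{0}(0)=B_{0}(0)$. The extra ``$+1$'' is exactly what repairs it, and checking this edge case is the main point of care.

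Next we realize $g$ as an accepting-witness count. Let $p'(n)=p(n)+3$. Define $R'(x,y')$ to accept $y'$ of length $p(|x|)+3$ in exactly two situations. In the first, $y'=0\,y\,b_1b_2$ with $|y|=p(|x|)$, $b_1,b_2\in\{0,1\}$ arbitrary, and $R(x,y)$; this contributes $4m$ witnesses. In the second, $y'=1^{p(|x|)+3}$; this contributes one further witness, disjoint from the first group because it begins with $1$. Then $R'$ is polynomial-time computable, and the count is exactly $4m(x)+1$. This is at most $4\cdot 2^{p}+1\leq 2^{p+3}$, so it is consistent with the witness length.

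Finally we combine the two steps. We get $B_{0}(g(x))\equiv B_{0}(m(x))+1 \pmod 2$, so $x\notin L$ iff $B_{0}(g(x))$ is odd. Hence $\overline{L}\in{\bf B_{|0|\oplus}P}$, and the class is closed under complement.
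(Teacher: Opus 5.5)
Your proof is correct and takes essentially the same approach as the paper: it also uses $G(x)=4H(x)+1$, where appending the suffix $01$ adds exactly one zero when $H(x)>0$, and the case $H(x)=0$ is handled by $B_0(0)=1$ and $B_0(1)=0$. The only difference is that you write out the witness predicate $R'$ with length $p(|x|)+3$ explicitly, whereas the paper simply cites closure of ${\bf \#P}$ under these operations.
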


\begin{proof}
Let $L\in{\bf B_{|0|\oplus}P}$. Then there exists a ${\bf \#P}$ function $H(x)$ such that $x\in L \iff B_0(H(x))\not\equiv 0\pmod 2$. Define $G(x)=4H(x)+1$. Since $H(x)$ is a ${\bf \#P}$ function, $G(x)$ is also a ${\bf \#P}$ function.

We claim that $B_0(G(x))$ has the opposite parity from $B_0(H(x))$. If $H(x)=0$, then $G(x)=1$, so $B_0(H(x))=B_0(0)=1$ and $B_0(G(x))=B_0(1)=0$. 

If $H(x)>0$, then multiplying by four appends two zeros to the binary representation of $H(x)$, and adding one changes the final two bit suffix from $00$ to $01$. As a result, the binary representation of $G(x)=4H(x)+1$ is obtained from the binary representation of $H(x)$ by appending $01$. Therefore, $B_0(G(x))=B_0(H(x))+1$.

Thus, for every $x$, $B_0(G(x))\not\equiv 0\pmod 2$ if and only if $B_0(H(x))\equiv 0\pmod 2$. Since $x\in L$ exactly when $B_0(H(x))$ is odd, it follows that $x\in\overline L$ exactly when $B_0(G(x))$ is odd. As a result, $\overline L\in{\bf B_{|0|\oplus}P}$. Therefore, ${\bf B_{|0|\oplus}P}$ is closed under complement.$\qed$
\end{proof}

\begin{theorem}
${\bf B_{|1|\oplus}P}$ is closed under complement.
\end{theorem}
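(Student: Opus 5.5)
The plan is to mirror the proof of the preceding theorem for ${\bf B_{|0|\oplus}P}$: take $L\in{\bf B_{|1|\oplus}P}$ with witnessing ${\bf \#P}$ function $H(x)$, so that $x\in L \iff B_1(H(x))\not\equiv 0\pmod 2$, and build a new ${\bf \#P}$ function $G(x)$ whose binary representation adds exactly one extra $1$ bit to that of $H(x)$. The natural choice is $G(x)=2H(x)+1$, which appends a single bit $1$ to the binary representation of $H(x)$. Unlike in the $B_0$ case, there is no special case to worry about here: $B_1(0)=0$ and $B_1(1)=1$, so for $H(x)=0$ we get $B_1(G(x))=B_1(H(x))+1$ directly, and for $H(x)>0$ appending the bit $1$ also increases the count of ones by exactly one.

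The first step is to check that $G$ is a ${\bf \#P}$ function. Given the predicate $R$ and polynomial $p$ for $H$, I would define a new predicate $R'$ on strings $y'=by$ of length $p(|x|)+1$, with $b$ a single bit. Set $R'(x,0y)=R(x,y)\vee R(x,y)$, so that each accepting $y$ yields two accepting strings; this is cleaner to phrase as saying that each original accepting path is duplicated. Then add exactly one extra accepting string, for example $1 0^{p(|x|)}$, with all other strings of the form $1y$ rejected. This requires first duplicating and then fixing the length. To be safe I would pad the length to $p(|x|)+2$, using two leading bits: $00y$ and $01y$ accept iff $R(x,y)$, $10^{p(|x|)+1}$ accepts, and everything else rejects. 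The count is then exactly $2H(x)+1$. Closure of ${\bf \#P}$ under $2f+1$ is standard, so this step can be stated briefly, just as the previous proof did for $4H(x)+1$.

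The second step is the parity argument: since $B_1(G(x))=B_1(H(x))+1$ for every $x$, $B_1(G(x))$ is odd exactly when $B_1(H(x))$ is even, that is, exactly when $x\in\overline L$. Hence $\overline L\in{\bf B_{|1|\oplus}P}$.

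There is essentially no obstacle here. The only point needing care is the fixed-length witness convention in the definition, which requires the padding argument in the first step; the bit-level identity $B_1(2n+1)=B_1(n)+1$ is immediate.
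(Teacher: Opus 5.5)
Your proposal is correct and follows the paper's proof exactly: the paper also takes $G(x)=2H(x)+1$ and uses the fact that appending a $1$ bit gives $B_1(G(x))=B_1(H(x))+1$. The paper treats $H(x)=0$ as a separate case, which, as you note, gives the same identity. Your explicit fixed-length witness construction ($00y$ and $01y$ accepting iff $R(x,y)$, plus the single extra string $10^{p(|x|)+1}$) is correct and supplies a detail the paper only asserts. You should delete the earlier sentence about $R(x,y)\vee R(x,y)$, which does not duplicate anything.
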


\begin{proof}
Let $L\in{\bf B_{|1|\oplus}P}$. Then there exists a ${\bf \#P}$ function $H(x)$ such that $x\in L \iff B_1(H(x))\not\equiv 0\pmod 2$. Define $G(x)=2H(x)+1$. Since $H(x)$ is a ${\bf \#P}$ function, $G(x)$ is also a ${\bf \#P}$ function.

We claim that $B_1(G(x))$ has the opposite parity from $B_1(H(x))$. If $H(x)=0$, then $G(x)=1$, so $B_1(H(x))=0$ and $B_1(G(x))=1$. If $H(x)>0$, then multiplying by two appends a $0$ to the binary representation of $H(x)$, and adding one changes it to a $1$. As a result, the binary representation of $G(x)=2H(x)+1$ is obtained from the binary representation of $H(x)$ by appending $1$. Therefore, $B_1(G(x))=B_1(H(x))+1$.

Thus, for every $x$, $B_1(G(x))\not\equiv 0\pmod 2$ if and only if $B_1(H(x))\equiv 0\pmod 2$. Since $x\in L$ exactly when $B_1(H(x))$ is odd, it follows that $x\in\overline{L}$ exactly when $B_1(G(x))$ is odd. As a result, $\overline{L}\in{\bf B_{|1|\oplus}P}$. Therefore, ${\bf B_{|1|\oplus}P}$ is closed under complement.$\qed$
\end{proof}

\begin{theorem}

${\bf B_{|1|\oplus}P}\subseteq {\bf B_{|0|\oplus}P}$

\end{theorem}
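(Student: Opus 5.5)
The plan is to turn the count $H(x)$ into a new count whose binary representation has a fixed length determined by $|x|$. For numbers of a fixed bit-length, counting zeros and counting ones are complementary, so $B_0$ of the new count and $B_1(H(x))$ will differ by a known constant whose parity we can control.

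Let $L\in{\bf B_{|1|\oplus}P}$, witnessed by a polynomial $p$ and a predicate $R$. As in the closure-under-complement proofs, write $H(x)=||\{y \mid |y|=p(|x|)\wedge R(x,y)\}||$, a ${\bf \#P}$ function with $x\in L\iff B_1(H(x))\not\equiv 0\pmod 2$. Since $0\le H(x)\le 2^{p(|x|)}$, $H(x)$ fits in $p(|x|)+1$ bits. Choose $m=m(|x|)$ to be whichever of $p(|x|)+1$ and $p(|x|)+2$ is even, and define
\[
G(x)=2^{m}+H(x).
\]

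First we check that $G$ is a ${\bf \#P}$ function. The count $G(x)$ is realized by a predicate on strings of length $m+1$, which is polynomial in $|x|$. A string $0z$ with $|z|=m$ is accepted iff $z=y0^{m-p(|x|)}$ for some $y$ with $R(x,y)$; this contributes exactly $H(x)$ strings. Every string $1z$ with $|z|=m$ is accepted; this contributes $2^m$ strings. (If the paper's convention requires counting over strings of length exactly the polynomial, this padding makes the total count $2^m+H(x)$ as desired.)

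Next we analyze the bits. Because $H(x)<2^m$, the binary representation of $G(x)$ is a leading $1$ followed by the $m$-bit zero-padded representation of $H(x)$. The padding zeros are now genuine zeros of $G(x)$, so
\[
B_0(G(x)) = m - B_1(H(x)).
\]
This also covers the edge case $H(x)=0$, where the convention $B_0(0)=1$ would otherwise be a nuisance: here $G(x)=2^m$, $B_0(G(x))=m$ and $B_1(H(x))=0$. Since $m$ is even, $B_0(G(x))\equiv B_1(H(x))\pmod 2$. Hence $x\in L\iff B_0(G(x))$ is odd, so $L\in{\bf B_{|0|\oplus}P}$.

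The main obstacle is the second step: making sure the count is taken over strings of a single polynomial length, so that $G$ genuinely fits the definition of ${\bf B_{|0|\oplus}P}$. The padding construction above handles this. The bit analysis itself is routine once the leading $1$ fixes the length.
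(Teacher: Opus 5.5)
Your proof is correct and is essentially the paper's argument: the paper also puts a leading $1$ at an even position, setting $H(x)=2^q+F(x)$ with $q=2p+2$. It then concludes from $B_0(H(x))=q-B_1(F(x))$, just as you conclude from $B_0(G(x))=m-B_1(H(x))$. Your explicit padding construction checks membership in ${\bf \#P}$ more carefully than the paper's appeal to closure under addition. One small point: taking $m=2p(|x|)+2$, as the paper does, would make the witness length an actual polynomial rather than one that depends on the parity of $p(|x|)$.
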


\begin{proof}
We use the following definition of ${\bf B_{|0|\oplus}P}$. A language $L$ is in ${\bf B_{|0|\oplus}P}$ if one can construct a ${\bf \#P}$ function $H(x)$ such that $x\in L \iff B_0(H(x))\not\equiv 0\pmod 2$, where $B_0(n)$ denotes the number of $0$'s bits in the binary representation of $n$.

Let $L\in{\bf B_{|1|\oplus}P}$. Then there exists a ${\bf \#P}$ function $F(x)$ such that $x\in L \iff B_1(F(x))\not\equiv 0\pmod 2$, where $B_1(n)$ denotes the number of $1$'s bits in the binary representation of $n$. Since $F(x)$ is a ${\bf \#P}$ function, there is a polynomial $p(n)$ such that $0\leq F(x)\leq 2^{p(|x|)}$ for every input $x$. We will use $p=p(|x|)$ for readability. We define $q=2p+2$, which is even.  We also have $F(x)<2^q$, since $F(x)\leq 2^p$, and $p<q$. We next define $H(x)=2^q+F(x)$.

We first verify that $H(x)$ is a valid ${\bf \#P}$ function. The length dependent term $2^q$ is a ${\bf \#P}$ function since $q$ is polynomially bounded. Since ${\bf \#P}$ is closed under addition, we have $H(x)=2^q+F(x)\in{\bf \#P}$.

We next prove the main theorem with the following two lemmas.

\begin{lemma}
If $x\in L$, then $B_0(H(x))\not\equiv 0\pmod 2$.
\end{lemma}

\begin{proof}
Assume $x\in L$. Then $B_1(F(x))\not\equiv 0\pmod 2$. Since $F(x)<2^q$, the sum $H(x)=2^q+F(x)$ has a $1$ in bit position $q$, and its lower bit positions $q-1,q-2,\ldots,0$ have value $F(x)$. The bit in position $q$ is $1$, so it contributes no $0$'s bits. Among the lower $q$ positions, exactly $B_1(F(x))$ positions contain a $1$'s bit. Therefore, the remaining $q-B_1(F(x))$ lower positions contain $0$'s bits. Thus, $B_0(H(x))=q-B_1(F(x))$.

Since $q$ is even and $B_1(F(x))$ is odd, then $q-B_1(F(x))$ is odd. Therefore, $B_0(H(x))\not\equiv 0\pmod 2$.$\qed$

\end{proof}

\begin{lemma}
If $x\notin L$, then $B_0(H(x))\equiv 0\pmod 2$.
\end{lemma}

\begin{proof}
Assume $x\notin L$. Then $B_1(F(x))\equiv 0\pmod 2$. Since $F(x)<2^q$, the sum $H(x)=2^q+F(x)$ has a $1$ in bit position $q$, and its lower bit positions $q-1,q-2,\ldots,0$ have value $F(x)$. The bit in position $q$ is $1$, so it contributes no $0$'s bits. Among the lower $q$ positions, exactly $B_1(F(x))$ positions contain $1$'s bits. Therefore, the remaining $q-B_1(F(x))$ lower positions contain $0$'s bits. Thus, $B_0(H(x))=q-B_1(F(x))$.

Since $q$ is even and $B_1(F(x))$ is even, then $q-B_1(F(x))$ is even. Therefore, $B_0(H(x))\equiv 0\pmod 2$.$\qed$

\end{proof}

The two lemmas show that $x\in L \iff B_0(H(x))\not\equiv 0\pmod 2$. Since $H(x)\in{\bf \#P}$, we have that $L\in{\bf B_{|0|\oplus}P}$. Since $L\in{\bf B_{|1|\oplus}P}$ was any $L$, then we can conclude that ${\bf B_{|1|\oplus}P}\subseteq{\bf B_{|0|\oplus}P}$.$\qed$

\end{proof}

\subsection{Containment of ${\bf US}$}

We show that ${\bf US}\subseteq {\bf P}^{{\bf B_{|1|\oplus}P}}$ and ${\bf US}\subseteq {\bf P}^{{\bf B_{|0|\oplus}P}}$, each proof uses two different functions, one to check to see if the number of accepting paths is zero and the other one to check to see if the number of accepting paths is greater than one.

\begin{theorem}
${\bf US}\subseteq {\bf P}^{{\bf B_{|1|\oplus}P}}$
\end{theorem}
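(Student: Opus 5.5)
Let $L\in{\bf US}$ with witness predicate $R$ and polynomial $p$, and write $F(x)=||\{y \mid |y|=p(|x|)\wedge R(x,y)\}||$, a ${\bf \#P}$ function. Since $x\in L$ iff $F(x)=1$, it suffices to decide with two ${\bf B_{|1|\oplus}P}$ queries whether $F(x)=0$ and whether $F(x)\geq 2$; the ${\bf P}$ machine accepts exactly when the first answer is ``no'' and the second is ``no''.

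For the first query I use the fact, stated in Section~2, that ${\bf CoNP}\subseteq{\bf B_{|1|\oplus}P}$. The language $A=\{x \mid F(x)=0\}$ is in ${\bf CoNP}$ by definition, so $A\in{\bf B_{|1|\oplus}P}$. This step can also be done directly: $B_1(F(x))$ is odd whenever $F(x)$ is a power of two. More simply, since ${\bf NP}\subseteq{\bf B_{|1|\oplus}P}$, the complement of $A$ is in the class as well.

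For the second query I need $C=\{x \mid F(x)\geq 2\}$. This language is in ${\bf NP}$: guess two distinct strings $y_1<y_2$ of length $p(|x|)$ and check $R(x,y_1)\wedge R(x,y_2)$. Hence, by ${\bf NP}\subseteq{\bf B_{|1|\oplus}P}$ from \cite{P26}, $C\in{\bf B_{|1|\oplus}P}$. If a more self-contained argument is wanted, I would instead build the ${\bf \#P}$ function $G(x)$ counting ordered pairs $(y_1,y_2)$ with $y_1<y_2$ both accepting, so $G(x)=\binom{F(x)}{2}$. Then $G(x)=0$ iff $F(x)\leq 1$, and reduce to the ${\bf CoNP}$ case as above.

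The machine then accepts iff $x\notin A$ and $x\notin C$, i.e.\ $F(x)=1$. Both queries are single nonadaptive oracle calls, and the two oracle languages can be merged into one ${\bf B_{|1|\oplus}P}$ oracle via a tagged disjoint union $\{0x \mid x\in A\}\cup\{1x \mid x\in C\}$. I expect this merging to be the only real point needing care. It works because the class is defined by a single ${\bf \#P}$ function, and one can branch on the first bit inside the nondeterministic machine. Closure under complement (the theorem above) also lets the machine invert answers freely. The main obstacle is thus just verifying the tagged-union closure; everything else follows from the known containments ${\bf NP},{\bf CoNP}\subseteq{\bf B_{|1|\oplus}P}$.
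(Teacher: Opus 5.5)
Your proof is correct, but it takes a more modular route than the paper. You treat the two tests ``$F(x)=0$'' and ``$F(x)\geq 2$'' as a ${\bf CoNP}$ language and an ${\bf NP}$ language. You then invoke the containments ${\bf CoNP}\subseteq{\bf B_{|1|\oplus}P}$ and ${\bf NP}\subseteq{\bf B_{|1|\oplus}P}$ listed in Section~2 as black boxes, and merge the two oracles by a tagged join.

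The paper instead writes down the witnessing ${\bf \#P}$ functions explicitly. With $r=2p+1$ odd, it takes $H_0(x)=(2^r-1)A(x)$ and $H_1(x)=(2^r-1)A(x)(A(x)-1)$. It uses the identity $(2^r-1)N=(N-1)2^r+(2^r-N)$, whose two $r$-bit blocks are bitwise complements. Hence $B_1((2^r-1)N)=r$ is odd for every $1\leq N\leq 2^r$, while $B_1(0)=0$. In effect the paper inlines the proof of ${\bf NP}\subseteq{\bf B_{|1|\oplus}P}$ for the two specific predicates $A(x)>0$ and $A(x)\geq 2$, which makes its proof self-contained.

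Your version is shorter and makes plain that the theorem is a direct corollary of ${\bf US}\subseteq{\bf P}^{\bf NP}$ (two nonadaptive ${\bf NP}$ queries suffice) together with ${\bf NP}\subseteq{\bf B_{|1|\oplus}P}$. The same reasoning gives the stronger ${\bf P}^{\bf NP}\subseteq{\bf P}^{{\bf B_{|1|\oplus}P}}$. It also transfers verbatim to ${\bf B_{|0|\oplus}P}$, for which the paper needs a separate padded construction.

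Three small remarks:
\begin{itemize}
\item The sentence ``$B_1(F(x))$ is odd whenever $F(x)$ is a power of two'' does not give a direct argument for $A$ and should be dropped.
\item Your ``self-contained'' alternative via the pair-counting function $G$ still relies on the cited ${\bf CoNP}$ containment. If you want genuine self-containment, the $(2^r-1)$-multiplier trick is the ingredient to add.
\item The tagged join works as you describe, provided you pad the two witness lengths to a common polynomial in the length of the tagged input. Force the extra bits to $0$ so the counts are unchanged; this is needed because the class definition uses a single $p(|x|)$.
\end{itemize}
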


\begin{proof}
Let $L\in{\bf US}$. Then there is a non-deterministic polynomial time Turing machine $M$ such that, for every input $x$, $x\in L$ if and only if $M(x)$ has exactly one accepting path. Let $A(x)=\#\operatorname{acc}_M(x)$. Then $A(x)$ is a ${\bf \#P}$ function, and $x\in L$ if and only if $A(x)=1$.The machine $M$ uses exactly $p(n)$ non-deterministic bits on inputs of length $n$. We will use $p=p(|x|)$ for readability. Thus, $0\leq A(x)\leq 2^p$. Set $r=2p+1$. Then $r$ is odd. We also have $A(x)<2^r$ and $A(x)(A(x)-1)<2^r$.

We now define $H_0(x)=(2^r-1)A(x)$. This is a ${\bf \#P}$ function since a non-deterministic machine can guess an accepting path of $M(x)$ and also guess a nonzero $r$-bit string. If $A(x)=0$, then $H_0(x)=0$, so $B_1(H_0(x))=0$ is even. If $A(x)>0$, then $1\leq A(x)<2^r$, and the binary representation of $(2^r-1)A(x)$ has exactly $r$ many $1$'s bits. Note that $(2^r-1)A(x)=(A(x)-1)2^r+(2^r-A(x)))$. Moreover, $2^r-A(x)=((2^r-1)-(A(x)-1))$. So the $r$-bit representations of $A(x)-1$ and $2^r-A(x)$ are bitwise complements. Since $r$ is odd, $B_1(H_0(x))\not\equiv 0\pmod 2$. Therefore, $B_1(H_0(x))\not\equiv 0\pmod 2$ if and only if $A(x)>0$.

We next define $P(x)=A(x)(A(x)-1)$. This is a ${\bf \#P}$ function because it counts ordered pairs of distinct accepting paths of $M(x)$. We have $P(x)>0$ if and only if $A(x)\geq 2$. Define $H_1(x)=(2^r-1)P(x)$. This is also a ${\bf \#P}$ function since a non-deterministic machine can guess an ordered pair of distinct accepting paths of $M(x)$ and also guess a nonzero $r$-bit string. If $P(x)=0$, then $H_1(x)=0$, so $B_1(H_1(x))=0$ is even. If $P(x)>0$, then $1\leq P(x)<2^r$ and the binary representation of $(2^r-1)P(x)$ has exactly $r$ many $1$'s bits. Since $r$ is odd, $B_1(H_1(x))\not\equiv 0\pmod 2$. Therefore, $B_1(H_1(x))\not\equiv 0\pmod 2$ if and only if $A(x)\geq 2$.

By tagging the query with the value of $j\in\{0,1\}$, the two functions $H_0$ and $H_1$ can be combined into a single oracle language $O\in{\bf B_{|1|\oplus}P}$. The oracle says yes on the tagged query $(x,0)$ exactly when $B_1(H_0(x))\not\equiv 0\pmod 2$, and it says yes on the tagged query $(x,1)$ exactly when $B_1(H_1(x))\not\equiv 0\pmod 2$.

The deterministic polynomial time oracle machine queries $O$ on $(x,0)$ and $(x,1)$. Let the two oracle answer bits be $a_0$ and $a_1$. By the construction above, $a_0=1$ if and only if $A(x)>0$, and $a_1=1$ if and only if $A(x)\geq 2$. The machine accepts if $a_0=1$ and $a_1=0$ and it rejects otherwise.

This construct accepts exactly when $A(x)=1$. Since $x\in L$ if and only if $A(x)=1$, the oracle machine decides $L$. Therefore, $L\in{\bf P}^{{\bf B_{|1|\oplus}P}}$. Since $L\in{\bf US}$ was for any $L$, then we can conclude that ${\bf US}\subseteq{\bf P}^{{\bf B_{|1|\oplus}P}}$.$\qed$
\end{proof}

\begin{theorem}
${\bf US}\subseteq {\bf P}^{{\bf B_{|0|\oplus}P}}$
\end{theorem}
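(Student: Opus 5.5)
The quickest route is to combine two results already established above. We have just shown ${\bf US}\subseteq {\bf P}^{{\bf B_{|1|\oplus}P}}$, and the earlier theorem gives ${\bf B_{|1|\oplus}P}\subseteq {\bf B_{|0|\oplus}P}$. Relativizing the oracle class is monotone, so ${\bf P}^{{\bf B_{|1|\oplus}P}}\subseteq {\bf P}^{{\bf B_{|0|\oplus}P}}$. Concretely, any oracle $O\in{\bf B_{|1|\oplus}P}$ used by a deterministic polynomial time machine is itself a language in ${\bf B_{|0|\oplus}P}$, so the same machine with the same oracle witnesses membership in ${\bf P}^{{\bf B_{|0|\oplus}P}}$. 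Alternatively one can cite the known equality ${\bf P}^{{\bf B_{|0|\oplus}P}}={\bf P}^{{\bf B_{|1|\oplus}P}}$ from \cite{P26}. Either way, the statement follows immediately.

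To keep this section parallel to the previous proof, I would also give a direct construction in terms of $B_0$. Let $A(x)=\#\operatorname{acc}_M(x)$ with $0\le A(x)\le 2^p$, and let $P(x)=A(x)(A(x)-1)$ as before. Both are ${\bf \#P}$ functions, and both are less than $2^r$ for a suitable polynomial $r$. For each $G\in\{A,P\}$ I would define $H_G(x)=2^q+(2^r-1)G(x)$, where $q=2r+2$ is even. The term $2^q$ is a length-dependent ${\bf \#P}$ function, and ${\bf \#P}$ is closed under addition and under multiplication by $2^r-1$ (guess a nonzero $r$-bit string), so $H_G\in{\bf \#P}$. Because $(2^r-1)G(x)<2^{2r}<2^q$, the leading $1$ sits at position $q$ and the lower $q$ bits encode $(2^r-1)G(x)$. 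As in the proof of ${\bf B_{|1|\oplus}P}\subseteq{\bf B_{|0|\oplus}P}$, this gives $B_0(H_G(x))=q-B_1((2^r-1)G(x))$.

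The only real computation is the $B_1$ count. From the previous proof, $B_1((2^r-1)G)=0$ if $G=0$ and $B_1((2^r-1)G)=r$ if $1\le G<2^r$. Taking $r$ odd (say $r=2p+1$) makes $B_0(H_G(x))$ odd exactly when $G(x)>0$. Tagging queries by $j\in\{0,1\}$ combines $H_A$ and $H_P$ into a single oracle in ${\bf B_{|0|\oplus}P}$. The deterministic machine then accepts iff the answer on $(x,0)$ is yes and the answer on $(x,1)$ is no, which holds exactly when $A(x)=1$.

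The main obstacle in the direct route is making sure the $2^q$ padding is long enough that no carries interact and the bit-length of $H_G$ is fixed. The choice $q>2r$ handles this. The quick route via monotonicity has no real obstacle.
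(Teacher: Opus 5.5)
Your proposal is correct, and your main argument takes a different route from the paper. The paper does not derive this theorem from earlier results. It gives a standalone construction with two ${\bf \#P}$ functions, $H_0(x)=4(2^q+(2^r-1)A(x))+1$ and $H_1(x)=2^q+(2^r-1)A(x)(A(x)-1)$, where $r=2p+1$ and $q=4p+2$. The two oracle answers then read ``$A(x)=0$'' and ``$A(x)\geq 2$'', and the machine accepts iff both are no.

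Your quick route instead treats the statement as an immediate corollary. It combines the preceding theorem ${\bf US}\subseteq{\bf P}^{{\bf B_{|1|\oplus}P}}$ with ${\bf B_{|1|\oplus}P}\subseteq{\bf B_{|0|\oplus}P}$ (or with ${\bf P}^{{\bf B_{|0|\oplus}P}}={\bf P}^{{\bf B_{|1|\oplus}P}}$), using monotonicity of ${\bf P}^{(\cdot)}$ in the oracle class. This is valid. It shows that no new construction is needed once those results are in hand, at the cost of depending on them, whereas the paper's argument is self-contained.

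Your direct construction is the composition of the functions $(2^r-1)G(x)$ from the ${\bf B_{|1|\oplus}P}$ proof with the padding map $F\mapsto 2^q+F$ from the proof of ${\bf B_{|1|\oplus}P}\subseteq{\bf B_{|0|\oplus}P}$, applied uniformly to $G=A$ and $G=P$. It coincides with the paper's $H_1$. It drops the paper's $4(\cdot)+1$ wrapper on $H_0$, which only complements that answer and is unnecessary, since the deterministic machine can negate an oracle bit itself.

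Your bounds check out. With $r=2p+1$ odd, both $A(x)$ and $A(x)(A(x)-1)$ are below $2^r$. Then $(2^r-1)G(x)<2^{2r}<2^q$ for $q=2r+2$ even. Hence $B_0(H_G(x))=q$, which is even, when $G(x)=0$, and $B_0(H_G(x))=q-r$, which is odd, when $G(x)>0$.
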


\begin{proof}
Let $L\in{\bf US}$. Then there is a non-deterministic polynomial time Turing machine $M$ such that, for every input $x$, $x\in L$ if and only if $M(x)$ has exactly one accepting path. Let $A(x)=\#\operatorname{acc}_M(x)$. Then $A(x)$ is a ${\bf \#P}$ function, and $x\in L$ if and only if $A(x)=1$. The machine $M$ uses exactly $p(n)$ non-deterministic bits on inputs of length $n$. We will use $p=p(|x|)$ for readability. Thus, $0\leq A(x)\leq 2^p$. Set $r=2p+1$ and $q=4p+2$. Then $r$ is odd and $q$ is even. We also have $A(x)<2^r$, $A(x)(A(x)-1)<2^r$, $(2^r-1)A(x)<2^q$, and $(2^r-1)A(x)(A(x)-1)<2^q$.

We now define $H_0(x)=4(2^q+(2^r-1)A(x))+1$. This is a ${\bf \#P}$ function since a non-deterministic machine can guess an accepting path of $M(x)$ and also guess a nonzero r-bit string. ${\bf \#P}$ is closed under addition and multiplication of constants and length dependent terms. If $A(x)=0$, then $H_0(x)=4\cdot 2^q+1=2^{q+2}+1$. Its binary representation has a $1$ in bit position $q+2$, a $1$ in bit position $0$, and exactly $q+1$ zeros between them. Since $q$ is even, $q+1$ is odd. As a result, $B_0(H_0(x))\not\equiv 0\pmod 2$. If $A(x)>0$, then $1\leq A(x)<2^r$. The binary representation of $(2^r-1)A(x)$ has exactly $r$ positions occupied by $1$. Recall that $(2^r-1)A(x)=(A(x)-1)2^r+(2^r-A(x)))$. Moreover, $2^r-A(x)=((2^r-1)-(A(x)-1))$. So the $r$-bit representations of $A(x)-1$ and $2^r-A(x)$ are bitwise complements. Now since $(2^r-1)A(x)<2^q$, the lower $q$ positions of $2^q+(2^r-1)A(x)$ contain exactly $r$ positions occupied by $1$ and exactly $q-r$ positions occupied by $0$. Multiplying by $4$ shifts this representation two positions to the left, and adding $1$ puts a $1$ in the lowest bit position. Thus the binary representation of $H_0(x)$ has exactly $q-r+1$ positions occupied by $0$. Since $q$ is even and $r$ is odd, $q-r+1$ is even. As a result, $B_0(H_0(x))\equiv 0\pmod 2$. Therefore, $B_0(H_0(x))\not\equiv 0\pmod 2$ if and only if $A(x)=0$.

We next define $P(x)=A(x)(A(x)-1)$. This is a ${\bf \#P}$ function because it counts ordered pairs of distinct accepting paths of $M(x)$. We have $P(x)>0$ if and only if $A(x)\geq 2$. Define $H_1(x)=2^q+(2^r-1)P(x)$.
This is a ${\bf \#P}$ function, since $2^q$ is a length dependent term and ${\bf \#P}$ is closed under multiplication and addition. If $A(x)<2$, then $P(x)=0$, so $H_1(x)=2^q$. The binary representation of $2^q$ is a single $1$ followed by exactly $q$ zeros. Since $q$ is even, $B_0(H_1(x))\equiv 0\pmod 2$. If $A(x)\geq 2$, then $P(x)>0$ and $1\leq P(x)<2^r$. The binary representation of $(2^r-1)P(x)$ has exactly $r$ positions occupied by $1$. Since $(2^r-1)P(x)<2^q$, the binary representation of $H_1(x)=2^q+(2^r-1)P(x)$ has a $1$ in bit position $q$, and its lower $q$ positions contain exactly $r$ positions occupied by $1$ and exactly $q-r$ positions occupied by $0$. Since $q$ is even and $r$ is odd, $q-r$ is odd. As a result, $B_0(H_1(x))\not\equiv 0\pmod 2$. Therefore, $B_0(H_1(x))\not\equiv 0\pmod 2$ if and only if $A(x)\geq 2$.

By tagging the query with the value of $j\in\{0,1\}$, the two functions $H_0$ and $H_1$ can be combined into a single oracle language $O\in{\bf B_{|0|\oplus}P}$. The oracle says yes on the tagged query $(x,0)$ exactly when $B_0(H_0(x))\not\equiv 0\pmod 2$, and it says yes on the tagged query $(x,1)$ exactly when $B_0(H_1(x))\not\equiv 0\pmod 2$.

The deterministic polynomial time oracle machine queries $O$ on $(x,0)$ and $(x,1)$. Let the two oracle answer bits be $a_0$ and $a_1$. By construction above, $a_0=1$ if and only if $A(x)=0$, and $a_1=1$ if and only if $A(x)\geq 2$. The machine rejects if $a_0=1$ or $a_1=1$ and accepts otherwise. 

This construct accepts exactly when $A(x)=1$. Since $x\in L$ if and only if $A(x)=1$, the oracle machine decides $L$. Therefore, $L\in{\bf P}^{{\bf B_{|0|\oplus}P}}$. Since $L\in{\bf US}$ was for any $L$, then we can conclude that ${\bf US}\subseteq{\bf P}^{{\bf B_{|0|\oplus}P}}$.$\qed$
\end{proof}

\subsection{Containment of ${\bf \oplus P}$}

We first establish the sequences that are the class defining characteristic functions of ${\bf B_{|1|\oplus}P}$ and ${\bf B_{|0|\oplus}P}$. After that, we prove that any block of four consecutive values in these sequences can be used to establish whether or not the first value was even or odd. Finally, we prove that ${\bf \oplus P}\subseteq {\bf P}^{{\bf B_{|1|\oplus}P}}$ and ${\bf \oplus P}\subseteq {\bf P}^{{\bf B_{|0|\oplus}P}}$.

\begin{definition}\normalfont
The Prouhet-Thue-Morse sequence \cite{M21} \cite{S73}  is the infinite binary sequence $(t(m))_{m\ge 0}$ defined by $t(m)\equiv B_1(m)\pmod 2$, where $B_1(m)$ denotes the number of $1$'s bits in the standard binary representation of $m$. Equivalently, $t(m)=0$ if $B_1(m)$ is even, and $t(m)=1$ if $B_1(m)$ is odd. The sequence begins starting with $t((m))_{m\ge 0}=0,$$1,1,0,1,0,0,1,1,0,0,1,0,1,1,0,\ldots$. 

It also satisfies the recursive identities $t(0)=0$, $t(2m)=t(m)$, and $t(2m+1)=1-t(m)$. In essence, the sequence contains no three consecutive equal bits, neither $000$ nor $111$ occurs as a contiguous block.
\end{definition}

\begin{definition}\normalfont
The $B_0$-parity sequence \cite{B01} is the infinite binary sequence $(s(m))_{m\ge 0}$ defined by $s(m)\equiv B_0(m)\pmod 2$, where $B_0(m)$ denotes the number of $0$'s bits in the standard binary representation of $m$, with the convention that the standard binary representation of $0$ is $0$. Equivalently, $s(m)=0$ if $B_0(m)$ is even, and $s(m)=1$ if $B_0(m)$ is odd. The sequence begins starting with $s((m))_{m\ge 0}=1,$$0,1,0,0,1,1,0,1,0,0,1,0,1,1,0,\ldots$.

It also satisfies the recursive identities $s(0)=1$, $s(1)=0$, $s(2m)=1-s(m)$ for $m\ge 1$, and $s(2m+1)=s(m)$ for $m\ge 1$. In essence, the sequence contains no three consecutive equal bits, neither $000$ nor $111$ occurs as a contiguous block.
\end{definition}

We next provide a formal proof that we can retrieve the parity of any integer $N$ by using the values for $t(N)$,$t(N+1)$,$t(N+2)$ and $t(N+3)$ or the values for $s(N)$,$s(N+1)$,$s(N+2)$ and $s(N+3)$.

\begin{theorem}
Let $t(m)=B_1(m)\pmod 2$, where $B_1(m)$ denotes the number of $1$'s bits in the standard binary representation of $m$. Then, for every $N\geq 0$, the four values $t(N),t(N+1),t(N+2),t(N+3)$ determine whether $N$ is even or odd. More precisely, $N$ is even if and only if $t(N)\neq t(N+1)$ and $t(N+2)\neq t(N+3)$.
\end{theorem}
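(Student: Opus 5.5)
The plan is to use the recursive identities $t(2m)=t(m)$ and $t(2m+1)=1-t(m)$ stated in the definition of the Prouhet--Thue--Morse sequence. From them, whenever $M$ is even, say $M=2m$, we get $t(M+1)=t(2m+1)=1-t(m)=1-t(M)$. Hence $t(M)\neq t(M+1)$ for every even $M$. We prove the two directions of the biconditional separately.

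\emph{($\Rightarrow$)} Suppose $N$ is even. Then both $N$ and $N+2$ are even. Applying the observation with $M=N$ gives $t(N)\neq t(N+1)$, and applying it with $M=N+2$ gives $t(N+2)\neq t(N+3)$. So both inequalities hold.

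\emph{($\Leftarrow$)} We prove the contrapositive: if $N$ is odd, then $t(N)=t(N+1)$ or $t(N+2)=t(N+3)$. For odd $N$, the numbers $N+1$ and $N+3$ are even. The observation then gives $t(N+2)=1-t(N+1)$ and $t(N+4)=1-t(N+3)$, but neither pins down $t(N)$ versus $t(N+1)$ or $t(N+2)$ versus $t(N+3)$ directly. So we argue by contradiction, assuming $t(N)\neq t(N+1)$ and $t(N+2)\neq t(N+3)$.
\begin{itemize}
\item From $t(N)\neq t(N+1)$ and $t(N+2)=1-t(N+1)$ we get $t(N+2)=t(N)$.
\item Then $t(N+3)=1-t(N+2)=1-t(N)=t(N+1)$.
\end{itemize}
So the block $t(N),t(N+1),t(N+2),t(N+3)$ has the form $a,1-a,a,1-a$. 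It remains to show this pattern cannot start at an odd index.

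To rule it out, write $N=2m+1$. Using $t(2k+1)=1-t(k)$ and $t(2k)=t(k)$, the four values become
\[
t(N)=1-t(m),\quad t(N+1)=t(m+1),\quad t(N+2)=1-t(m+1),\quad t(N+3)=t(m+2).
\]
Matching these with the pattern $a,1-a,a,1-a$ gives $t(m+1)=t(m)$ from the first two entries and $t(m+2)=t(m+1)$ from the last two, since $t(N+3)=t(N+1)$. So $t(m)=t(m+1)=t(m+2)$, which is three consecutive equal values. This contradicts the property, stated in the definition, that neither $000$ nor $111$ occurs as a contiguous block.

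The main obstacle is that this cube-free property is only asserted in the definition, not proved. To be self-contained we would prove it directly. Among any three consecutive indices $m,m+1,m+2$, there is an even index $e$ with $e+1$ also in the range (take $e=m$ if $m$ is even, and $e=m+1$ otherwise). By the opening observation $t(e)\neq t(e+1)$, so the three values cannot all be equal. This closes the argument, and it is short enough to include inline.
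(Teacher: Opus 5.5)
Your proof is correct and takes essentially the paper's route. In both, even $N$ is handled by $t(2m+1)=1-t(2m)$. For odd $N=2m+1$, both rewrite the block as $1-t(m),\,t(m+1),\,1-t(m+1),\,t(m+2)$, deduce $t(m)=t(m+1)=t(m+2)$, and refute this with an even index $e$ such that $e,e+1\in\{m,m+1,m+2\}$.

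Your intermediate derivation of the alternating pattern $a,1-a,a,1-a$ is harmless but unnecessary. The two assumed inequalities already translate directly into $t(m)=t(m+1)$ and $t(m+1)=t(m+2)$.
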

\begin{proof}
We have the following two identities that define the sequence: for $m\geq 0$, $t(2m)=t(m)$ and $t(2m+1)=1-t(m)$. This is because the binary representation of $2m$ is obtained from the binary representation of $m$ by appending a $0$, while the binary representation of $2m+1$ is obtained by appending a $1$. Note that appending a $0$ does not change the number of $1$'s bits, while appending a $1$ changes the parity of the number of $1$'s bits.

First assume that $N$ is even. Then $N=2m$ for some $m\geq 0$. Thus, $N+1=2m+1$, $N+2=2(m+1)$, and $N+3=2(m+1)+1$. Using the identities above, we get $t(N)=t(2m)=t(m)$ and $t(N+1)=t(2m+1)=1-t(m)$, so $t(N)\neq t(N+1)$. Similarly, $t(N+2)=t(2(m+1))=t(m+1)$ and $t(N+3)=t(2(m+1)+1)=1-t(m+1)$, so $t(N+2)\neq t(N+3)$. Therefore, if $N$ is even, then both adjacent pairs differ.

Now assume that $N$ is odd. Then $N=2m+1$ for some $m\geq 0$. Therefore, $N+1=2(m+1)$, $N+2=2(m+1)+1$, and $N+3=2(m+2)$. Using the identities above, we get $t(N)=1-t(m)$, $t(N+1)=t(m+1)$, $t(N+2)=1-t(m+1)$, and $t(N+3)=t(m+2)$. 

Also assume for contradiction that both adjacent pairs differ. Then $t(N)\neq t(N+1)$ and $t(N+2)\neq t(N+3)$. Since $t(N)=1-t(m)$ and $t(N+1)=t(m+1)$, the inequality $t(N)\neq t(N+1)$ implies $1-t(m)\neq t(m+1)$, and $t(m)=t(m+1)$. Similarly, since $t(N+2)=1-t(m+1)$ and $t(N+3)=t(m+2)$, the inequality $t(N+2)\neq t(N+3)$ implies $1-t(m+1)\neq t(m+2)$, and $t(m+1)=t(m+2)$. Thus $t(m)=t(m+1)=t(m+2)$.

However, this is impossible. Indeed, if $m=2r$, then $t(m)=t(2r)=t(r)$ and $t(m+1)=t(2r+1)=1-t(r)$, so $t(m)\neq t(m+1)$. If $m=2r+1$, then $t(m+1)=t(2r+2)=t(r+1)$ and $t(m+2)=t(2r+3)=1-t(r+1)$, so $t(m+1)\neq t(m+2)$. Therefore, three consecutive values $t(m),t(m+1),t(m+2)$ cannot all be equal.

This contradiction shows that when $N$ is odd, it cannot be the case that both adjacent pairs differ. As a result, $t(N)\neq t(N+1)$ and $t(N+2)\neq t(N+3)$ hold simultaneously exactly when $N$ is even. Therefore, the four values $t(N),t(N+1),t(N+2),t(N+3)$ determine whether $N$ is even or odd.$\qed$
\end{proof}

\begin{theorem}
Let $s(m)=B_0(m)\pmod 2$, where $B_0(m)$ denotes the number of $0$'s bits in the standard binary representation of $m$. Then, for every $N\geq 0$, the four values $s(N),s(N+1),s(N+2),s(N+3)$ determine whether $N$ is even or odd. More precisely, $N$ is even if and only if $s(N)\neq s(N+1)$ and $s(N+2)\neq s(N+3)$.
\end{theorem}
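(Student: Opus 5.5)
We will mirror the proof of the preceding theorem for $t$, substituting the recursive identities for $s$. The first step is to justify those identities. For $m\geq 1$ the binary representation of $2m$ is that of $m$ with a $0$ appended, so $s(2m)=1-s(m)$. The representation of $2m+1$ is that of $m$ with a $1$ appended, so $s(2m+1)=s(m)$. The value $m=0$ is exceptional because of the convention $B_0(0)=1$: we have $s(0)=1$ and $s(1)=0$, and the identity $s(2m)=1-s(m)$ fails at $m=0$. This exceptional value is the main obstacle. We will handle it either by checking small $N$ directly or by arranging the argument so that the identities are only applied with $m\geq 1$.

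For $N$ even, write $N=2m$. If $m\geq 1$, then $s(N)=1-s(m)$ and $s(N+1)=s(m)$, so these differ. Likewise $s(N+2)=1-s(m+1)$ and $s(N+3)=s(m+1)$, which also differ. For $N=0$ we check directly from the listed values $1,0,1,0$ that both adjacent pairs differ.

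For $N$ odd, write $N=2m+1$ with $m\geq 0$. Then $N+1=2(m+1)$, $N+2=2(m+1)+1$ and $N+3=2(m+2)$. Since $m+1\geq 1$ and $m+2\geq 1$, the even-index identity applies to them, giving
\[
s(N+1)=1-s(m+1),\quad s(N+2)=s(m+1),\quad s(N+3)=1-s(m+2).
\]
For $s(N)$ we use $s(2m+1)=s(m)$ when $m\geq 1$. The case $N=1$ is checked directly: the values $0,1,0,0$ have $s(3)=s(4)$.

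Now take $m\geq 1$ and suppose, for contradiction, that both adjacent pairs differ. The inequality $s(m)\neq 1-s(m+1)$ gives $s(m)=s(m+1)$. The inequality $s(m+1)\neq 1-s(m+2)$ gives $s(m+1)=s(m+2)$. So $s(m)$, $s(m+1)$ and $s(m+2)$ would all be equal.

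We then show this is impossible for $m\geq 1$. If $m=2r$ with $r\geq 1$, then $s(m)=1-s(r)$ and $s(m+1)=s(r)$, which differ. If $m=2r+1$ with $r\geq 0$, then $m+1=2(r+1)$ and $m+2=2(r+1)+1$, so $s(m+1)=1-s(r+1)$ and $s(m+2)=s(r+1)$, which differ. This contradiction shows that when $N$ is odd at least one adjacent pair is equal. Together with the even case, $N$ is even if and only if $s(N)\neq s(N+1)$ and $s(N+2)\neq s(N+3)$.

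An alternative route, which we keep as a fallback, uses the fact that for $m\geq 1$ the length of $m$ is $B_0(m)+B_1(m)$. Hence $s(m)=t(m)+\lfloor\log_2 m\rfloor+1 \pmod 2$. Within a block $N,\ldots,N+3$ the length can change at most at a power of two, and the four-value pattern can be reconciled with the preceding theorem. However, the direct recursive argument above avoids this case analysis, so we will use it.
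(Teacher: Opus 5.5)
Your proposal is correct and follows essentially the same route as the paper: the identities $s(2m)=1-s(m)$ and $s(2m+1)=s(m)$ for $m\ge 1$, direct checks at $N=0$ and $N=1$, and, for odd $N>1$, a contradiction from the fact that $s(m),s(m+1),s(m+2)$ cannot all be equal when $m\ge 1$. The fallback via $s(m)\equiv t(m)+\lfloor\log_2 m\rfloor+1 \pmod 2$ is not needed, and the paper does not use it.
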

\begin{proof}
We have the following two identities that define the sequence: for every $m\geq 1$, $s(2m)=1-s(m)$ and $s(2m+1)=s(m)$. This is because the binary representation of $2m$ is obtained from the binary representation of $m$ by appending a $0$, while the binary representation of $2m+1$ is obtained by appending a $1$. Note that appending a $0$ changes the parity of the number of $0$'s bits, while appending a $1$ does not change the number of $0$'s bits.

First assume that $N$ is even. If $N=0$, then the four values are $0,1,2,3$, whose binary representations are $0,1,10,11$. Thus $s(0)=1$, $s(1)=0$, $s(2)=1$, and $s(3)=0$, so $s(N)\neq s(N+1)$ and $s(N+2)\neq s(N+3)$. Now suppose $N>0$. Then $N=2m$ for some $m\geq 1$. Therefore, $N+1=2m+1$, $N+2=2(m+1)$, and $N+3=2(m+1)+1$. Using the identities above, we get $s(N)=s(2m)=1-s(m)$ and $s(N+1)=s(2m+1)=s(m)$, so $s(N)\neq s(N+1)$. Similarly, $s(N+2)=s(2(m+1))=1-s(m+1)$ and $s(N+3)=s(2(m+1)+1)=s(m+1)$, so $s(N+2)\neq s(N+3)$. Therefore, if $N$ is even, then both adjacent pairs differ.

Now assume that $N$ is odd. If $N=1$, then the four values are $1,2,3,4$, whose binary representations are $1,10,11,100$. Thus $s(1)=0$, $s(2)=1$, $s(3)=0$, and $s(4)=0$, so $s(N)\neq s(N+1)$, but $s(N+2)=s(N+3)$. So it is not the case that both adjacent pairs differ. Next assume $N>1$. Then $N=2m+1$ for some $m\geq 1$. Therefore, $N+1=2(m+1)$, $N+2=2(m+1)+1$, and $N+3=2(m+2)$. Using the identities above, we get $s(N)=s(m)$, $s(N+1)=1-s(m+1)$, $s(N+2)=s(m+1)$, and $s(N+3)=1-s(m+2)$.

Also assume for contradiction that both adjacent pairs differ. Then $s(N)\neq s(N+1)$ and $s(N+2)\neq s(N+3)$. Since $s(N)=s(m)$ and $s(N+1)=1-s(m+1)$, the inequality $s(N)\neq s(N+1)$ implies $s(m)\neq 1-s(m+1)$, and $s(m)=s(m+1)$. Similarly, since $s(N+2)=s(m+1)$ and $s(N+3)=1-s(m+2)$, the inequality $s(N+2)\neq s(N+3)$ implies $s(m+1)\neq 1-s(m+2)$, and $s(m+1)=s(m+2)$. Thus $s(m)=s(m+1)=s(m+2)$.

However, this is impossible for $m\geq 1$. Indeed, if $m=2r$, then $r\geq 1$, and $s(m)=s(2r)=1-s(r)$ while $s(m+1)=s(2r+1)=s(r)$, so $s(m)\neq s(m+1)$. If $m=2r+1$, then $s(m+1)=s(2(r+1))=1-s(r+1)$ and $s(m+2)=s(2(r+1)+1)=s(r+1)$, so $s(m+1)\neq s(m+2)$. Therefore, three consecutive values $s(m),s(m+1),s(m+2)$ cannot all be equal.

This contradiction shows that when $N$ is odd, it is not the case that both adjacent pairs differ. As a result, $s(N)\neq s(N+1)$ and $s(N+2)\neq s(N+3)$ hold simultaneously exactly when $N$ is even. Therefore, the four values $s(N),s(N+1),s(N+2),s(N+3)$ determine whether $N$ is even or odd.$\qed$
\end{proof}

We next prove ${\bf \oplus P}\subseteq {\bf P}^{{\bf B_{|1|\oplus}P}}$ and ${\bf \oplus P}\subseteq {\bf P}^{{\bf B_{|0|\oplus}P}}$ by using the previous two theorems. 

\begin{theorem}
${\bf \oplus P}\subseteq {\bf P}^{{\bf B_{|1|\oplus}P}}$.
\end{theorem}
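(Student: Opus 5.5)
Let $L\in{\bf \oplus P}$, witnessed by a nondeterministic polynomial time machine $M$ with $A(x)=\#\operatorname{acc}_M(x)$, so that $x\in L$ if and only if $A(x)$ is odd. The plan is to apply the earlier theorem on the Prouhet--Thue--Morse sequence with $N=A(x)$. That theorem says $A(x)$ is even if and only if $t(A(x))\neq t(A(x)+1)$ and $t(A(x)+2)\neq t(A(x)+3)$. Each value $t(A(x)+i)$ for $i\in\{0,1,2,3\}$ is exactly the answer to the question ``is $B_1(A(x)+i)$ odd?'', which is a ${\bf B_{|1|\oplus}P}$-type question about a ${\bf \#P}$ function.

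First I will check that $H_i(x)=A(x)+i$ is a ${\bf \#P}$ function for each $i\in\{0,1,2,3\}$. To do this, modify $M$ so that it makes one extra initial branch. One side simulates $M$. The other side guesses $\lceil\log_2 4\rceil=2$ further bits and accepts on exactly $i$ of the resulting paths. Padding with rejecting branches keeps the number of nondeterministic bits uniform at some polynomial $p'(|x|)$. Equivalently, one can cite closure of ${\bf \#P}$ under addition of small constants, which the paper already uses freely.

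Next, as in the proofs of the ${\bf US}$ containments, I will combine the four functions into one oracle language. Let
\[
O=\{(x,i) : i\in\{0,1,2,3\},\ B_1(H_i(x))\not\equiv 0 \pmod 2\}.
\]
To see that $O\in{\bf B_{|1|\oplus}P}$, on input $(x,i)$ the machine runs the machine for $H_i$, and on malformed tags it simply has zero accepting paths. All four machines can be padded to a common polynomial number of guess bits, so a single predicate $R$ witnesses membership.

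The deterministic polynomial time oracle machine then queries $(x,0),(x,1),(x,2),(x,3)$ and obtains $a_i=t(A(x)+i)$. It accepts if and only if it is not the case that both $a_0\neq a_1$ and $a_2\neq a_3$. By the earlier theorem, this happens exactly when $A(x)$ is odd, that is, exactly when $x\in L$.

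The only step needing care is the uniformity of the ${\bf \#P}$ construction: the summand $i$ must be added while keeping a fixed polynomial length of witnesses, and the four functions must be merged into one predicate. This is routine padding. The correctness itself is immediate from the sequence theorem, so I expect no real obstacle.
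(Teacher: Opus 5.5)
Your proposal is correct and follows the paper's proof essentially step for step. The paper likewise forms $F(x)+j$ for $j\in\{0,1,2,3\}$, merges these into one tagged ${\bf B_{|1|\oplus}P}$ oracle, and rejects exactly when $a_0\neq a_1$ and $a_2\neq a_3$, invoking the four-term Thue--Morse parity theorem with $N=F(x)$. Your padding argument for a uniform witness length just spells out what the paper attributes to closure of ${\bf \#P}$ under addition.
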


\begin{proof}
Let $L\in{\bf \oplus P}$. Then there exists a ${\bf \#P}$ function $F(x)$ such that $x\in L$ if and only if $F(x)\not\equiv 0\pmod 2$. We will decide whether $F(x)$ is odd using a deterministic polynomial time oracle machine with access to a language in ${\bf B_{|1|\oplus}P}$ along with the result from Theorem 6.

For each $j\in\{0,1,2,3\}$, define $F_j(x)=F(x)+j$. Each function $F_j(x)$ is a ${\bf \#P}$ function, because fixed nonnegative constants are ${\bf \#P}$ functions and ${\bf \#P}$ is closed under addition. By tagging the query with the value of $j$, all four functions $F_0,F_1,F_2,F_3$ can be combined into a single oracle language $O\in{\bf B_{|1|\oplus}P}$. The oracle says yes on the tagged query $(x,j)$ exactly when $B_1(F_j(x))\not\equiv 0\pmod 2$.

Let $t(n)=B_1(n)\pmod 2$. Then the oracle answer on $(x,j)$ is precisely $t(F(x)+j)$. The deterministic polynomial time oracle machine queries the oracle for $F_0(x)=F(x)$, $F_1(x)=F(x)+1$, $F_2(x)=F(x)+2$, and $F_3(x)=F(x)+3$. Let the four oracle-answer bits be $a_0,a_1,a_2,a_3$, where $a_j=t(F(x)+j)$. The machine rejects if $a_0\neq a_1$ and $a_2\neq a_3$. Otherwise, it accepts.

By the result from Theorem 6 applied with $N=F(x)$, the condition $a_0\neq a_1$ and $a_2\neq a_3$ holds if and only if $F(x)$ is even. Therefore, the oracle machine rejects exactly when $F(x)$ is even, and accepts exactly when $F(x)$ is odd. Since $x\in L$ if and only if $F(x)$ is odd, the oracle machine decides $L$.

Therefore, $L\in{\bf P}^{{\bf B_{|1|\oplus}P}}$. Since $L\in{\bf \oplus P}$ was for any $L$, then we can conclude that ${\bf \oplus P}\subseteq{\bf P}^{{\bf B_{|1|\oplus}P}}$.$\qed$
\end{proof}

\begin{theorem}
${\bf \oplus P}\subseteq {\bf P}^{{\bf B_{|0|\oplus}P}}$.
\end{theorem}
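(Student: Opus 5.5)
The plan mirrors the proof of Theorem 8, with Theorem 7 (the $B_0$-parity sequence $s$) used in place of Theorem 6. Let $L\in{\bf \oplus P}$ with ${\bf \#P}$ function $F(x)$ such that $x\in L$ iff $F(x)$ is odd. For $j\in\{0,1,2,3\}$ define $F_j(x)=F(x)+j$. Each $F_j$ is in ${\bf \#P}$, since constants are ${\bf \#P}$ functions and ${\bf \#P}$ is closed under addition.

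Next we tag queries with $j$ to combine the four functions into a single oracle language $O$. It says yes on $(x,j)$ exactly when $B_0(F_j(x))\not\equiv 0\pmod 2$. To see that $O\in{\bf B_{|0|\oplus}P}$, build one ${\bf \#P}$ function $G((x,j))=F_j(x)$, which reads $j$ from the query and produces $F(x)+j$ accepting paths. The oracle answer on $(x,j)$ is then $a_j=s(F(x)+j)$. The convention $B_0(0)=1$ is harmless here, because Theorem 7 already treats $N=0$ and $N=1$ explicitly with $s(0)=1$.

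The deterministic machine makes the four queries $(x,0),\dots,(x,3)$ and receives $a_0,a_1,a_2,a_3$. It rejects if $a_0\neq a_1$ and $a_2\neq a_3$, and accepts otherwise. By Theorem 7 applied with $N=F(x)$, this condition holds exactly when $F(x)$ is even. Hence the machine accepts iff $F(x)$ is odd iff $x\in L$, so $L\in{\bf P}^{{\bf B_{|0|\oplus}P}}$.

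An even shorter route is also available: ${\bf \oplus P}\subseteq{\bf P}^{{\bf B_{|1|\oplus}P}}$ by Theorem 8, and ${\bf P}^{{\bf B_{|1|\oplus}P}}={\bf P}^{{\bf B_{|0|\oplus}P}}$ by the known equality, or alternatively by Theorem 3 (${\bf B_{|1|\oplus}P}\subseteq{\bf B_{|0|\oplus}P}$). No step is a real obstacle. The only point needing care is that the tagged combination of the four $F_j$ is a single ${\bf \#P}$ function, which is routine.
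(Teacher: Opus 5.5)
Your proof is correct and is essentially the paper's own argument. You query the four tagged ${\bf \#P}$ functions $F(x)+j$, $j\in\{0,1,2,3\}$, to a single ${\bf B_{|0|\oplus}P}$ oracle and decode the answers with Theorem 7, rejecting iff both adjacent pairs differ. Your shortcut is also valid, since Theorem 8 and Theorem 3 together give ${\bf \oplus P}\subseteq{\bf P}^{{\bf B_{|1|\oplus}P}}\subseteq{\bf P}^{{\bf B_{|0|\oplus}P}}$.
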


\begin{proof}
Let $L\in{\bf \oplus P}$. Then there exists a ${\bf \#P}$ function $F(x)$ such that $x\in L$ if and only if $F(x)\not\equiv 0\pmod 2$. We will decide whether $F(x)$ is odd using a deterministic polynomial time oracle machine with access to a language in ${\bf B_{|0|\oplus}P}$ along with the result from Theorem 7.

For each $j\in\{0,1,2,3\}$, define $F_j(x)=F(x)+j$. Each function $F_j(x)$ is a ${\bf \#P}$ function, because fixed nonnegative constants are ${\bf \#P}$ functions and ${\bf \#P}$ is closed under addition. By tagging the query with the value of $j$, all four functions $F_0,F_1,F_2,F_3$ can be combined into a single oracle language $O\in{\bf B_{|0|\oplus}P}$. The oracle says yes on the tagged query $(x,j)$ exactly when $B_0(F_j(x))\not\equiv 0\pmod 2$.

Let $s(n)=B_0(n)\pmod 2$. Then the oracle answer on $(x,j)$ is precisely $s(F(x)+j)$. The deterministic polynomial time oracle machine queries the oracle for $F_0(x)=F(x)$, $F_1(x)=F(x)+1$, $F_2(x)=F(x)+2$, and $F_3(x)=F(x)+3$. Let the four oracle-answer bits be $a_0,a_1,a_2,a_3$, where $a_j=s(F(x)+j)$. The machine rejects if $a_0\neq a_1$ and $a_2\neq a_3$. Otherwise, it accepts.

By the result from Theorem 7 applied  with $N=F(x)$, the condition $a_0\neq a_1$ and $a_2\neq a_3$ holds if and only if $F(x)$ is even. Therefore, the oracle machine rejects exactly when $F(x)$ is even, and accepts exactly when $F(x)$ is odd. Since $x\in L$ if and only if $F(x)$ is odd, the oracle machine decides $L$.

Therefore, $L\in{\bf P}^{{\bf B_{|0|\oplus}P}}$. Since $L\in{\bf \oplus P}$ was for any $L$, then we can conclude that ${\bf \oplus P}\subseteq{\bf P}^{{\bf B_{|0|\oplus}P}}$.$\qed$
\end{proof}

\section{Parity based bit-counting hierarchies}

Throughout this section, we use the relativized forms of the containments and oracle simulations established previously. In particular, the constructions proving  ${\bf NP}\subseteq{\bf B_{|0|\oplus}P}$, ${\bf NP}\subseteq{\bf B_{|1|\oplus}P}$, ${\bf CoNP}\subseteq{\bf B_{|0|\oplus}P}$, ${\bf CoNP}\subseteq{\bf B_{|1|\oplus}P}$, ${\bf B_{|1|\oplus}P}\subseteq{\bf B_{|0|\oplus}P}$  and ${\bf P}^{{\bf B_{|0|\oplus}P}}={\bf P}^{{\bf B_{|1|\oplus}P}}$ continue to hold when all of the participating machines are given access to the same fixed oracle. We also use the standard composition and absorption properties of deterministic polynomial time oracle computations. 

\subsection{Hierarchies inspired by the polynomial hierarchy}
The following four hierarchies are modeled after the polynomial hierarchy, where the $\Sigma$ versions are raw oracle towers and the $\Delta$ versions place an outer deterministic polynomial time machine over those towers. We do not introduce separate $\Pi$ levels since both ${\bf B_{|0|\oplus}P}$ and ${\bf B_{|1|\oplus}P}$ are closed under complement and this closure relativizes. We then prove that all four of these hierarchies contain the polynomial hierarchy. 

\begin{definition}\normalfont
The $\Sigma$-${\bf B_{|0|\oplus}P}$ hierarchy, denoted ${\bf \Sigma B_{|0|\oplus}H}$, is defined as follows. The zeroth level is ${\bf \Sigma B_{|0|\oplus}^{0}}={\bf P}$. The first level is ${\bf \Sigma B_{|0|\oplus}^{1}}={\bf B_{|0|\oplus}P}$. For every $k\geq 1$, define the next level by ${\bf \Sigma B_{|0|\oplus}^{k+1}}={\bf B_{|0|\oplus}P}^{{\bf \Sigma B_{|0|\oplus}^{k}}}$. The full hierarchy is ${\bf \Sigma B_{|0|\oplus}H}=\bigcup_{k\geq 0}{\bf \Sigma B_{|0|\oplus}^{k}}$.
\end{definition}

The subsequent few levels are as follows:

-${\bf \Sigma B_{|0|\oplus}^{2}}={\bf B_{|0|\oplus}P}^{{\bf B_{|0|\oplus}P}}$

-${\bf \Sigma B_{|0|\oplus}^{3}}={\bf B_{|0|\oplus}P}^{{\bf B_{|0|\oplus}P}^{{\bf B_{|0|\oplus}P}}}$ 

-${\bf \Sigma B_{|0|\oplus}^{4}}={\bf B_{|0|\oplus}P}^{{\bf B_{|0|\oplus}P}^{{\bf B_{|0|\oplus}P}^{{\bf B_{|0|\oplus}P}}}}$

\begin{definition}\normalfont
The $\Delta$-${\bf B_{|0|\oplus}P}$ hierarchy, denoted ${\bf \Delta B_{|0|\oplus}H}$, is defined as follows. The zeroth level is ${\bf \Delta B_{|0|\oplus}^{0}}={\bf P}$. For every $k\geq 1$, define ${\bf \Delta B_{|0|\oplus}^{k}}={\bf P}^{{\bf \Sigma B_{|0|\oplus}^{k}}}$. The full hierarchy is ${\bf \Delta B_{|0|\oplus}H}=\bigcup_{k\geq 0}{\bf \Delta B_{|0|\oplus}^{k}}$.
\end{definition}

The subsequent few levels are as follows:

-${\bf \Delta B_{|0|\oplus}^{1}}={\bf P}^{{\bf B_{|0|\oplus}P}}$

-${\bf \Delta B_{|0|\oplus}^{2}}={\bf P}^{{\bf B_{|0|\oplus}P}^{{\bf B_{|0|\oplus}P}}}$

-${\bf \Delta B_{|0|\oplus}^{3}}={\bf P}^{{\bf B_{|0|\oplus}P}^{{\bf B_{|0|\oplus}P}^{{\bf B_{|0|\oplus}P}}}}$ 

-${\bf \Delta B_{|0|\oplus}^{4}}={\bf P}^{{\bf B_{|0|\oplus}P}^{{\bf B_{|0|\oplus}P}^{{\bf B_{|0|\oplus}P}^{{\bf B_{|0|\oplus}P}}}}}$

\begin{definition}\normalfont
The $\Sigma$-${\bf B_{|1|\oplus}P}$ hierarchy, denoted ${\bf \Sigma B_{|1|\oplus}H}$, is defined as follows. The zeroth level is ${\bf \Sigma B_{|1|\oplus}^{0}}={\bf P}$. The first level is ${\bf \Sigma B_{|1|\oplus}^{1}}={\bf B_{|1|\oplus}P}$. For every $k\geq 1$, define the next level by ${\bf \Sigma B_{|1|\oplus}^{k+1}}={\bf B_{|1|\oplus}P}^{{\bf \Sigma B_{|1|\oplus}^{k}}}$. The full hierarchy is ${\bf \Sigma B_{|1|\oplus}H}=\bigcup_{k\geq 0}{\bf \Sigma B_{|1|\oplus}^{k}}$.
\end{definition}

The subsequent few levels are as follows:

-${\bf \Sigma B_{|1|\oplus}^{2}}={\bf B_{|1|\oplus}P}^{{\bf B_{|1|\oplus}P}}$

-${\bf \Sigma B_{|1|\oplus}^{3}}={\bf B_{|1|\oplus}P}^{{\bf B_{|1|\oplus}P}^{{\bf B_{|1|\oplus}P}}}$

-${\bf \Sigma B_{|1|\oplus}^{4}}={\bf B_{|1|\oplus}P}^{{\bf B_{|1|\oplus}P}^{{\bf B_{|1|\oplus}P}^{{\bf B_{|1|\oplus}P}}}}$

\begin{definition}\normalfont
The $\Delta$-${\bf B_{|1|\oplus}P}$ hierarchy, denoted ${\bf \Delta B_{|1|\oplus}H}$, is defined as follows. The zeroth level is ${\bf \Delta B_{|1|\oplus}^{0}}={\bf P}$. For every $k\geq 1$, define ${\bf \Delta B_{|1|\oplus}^{k}}={\bf P}^{{\bf \Sigma B_{|1|\oplus}^{k}}}$. The full hierarchy is ${\bf \Delta B_{|1|\oplus}H}=\bigcup_{k\geq 0}{\bf \Delta B_{|1|\oplus}^{k}}$.
\end{definition}

The subsequent few levels are as follows: 

-${\bf \Delta B_{|1|\oplus}^{1}}={\bf P}^{{\bf B_{|1|\oplus}P}}$ 

-${\bf \Delta B_{|1|\oplus}^{2}}={\bf P}^{{\bf B_{|1|\oplus}P}^{{\bf B_{|1|\oplus}P}}}$ 

-${\bf \Delta B_{|1|\oplus}^{3}}={\bf P}^{{\bf B_{|1|\oplus}P}^{{\bf B_{|1|\oplus}P}^{{\bf B_{|1|\oplus}P}}}}$ 

-${\bf \Delta B_{|1|\oplus}^{4}}={\bf P}^{{\bf B_{|1|\oplus}P}^{{\bf B_{|1|\oplus}P}^{{\bf B_{|1|\oplus}P}^{{\bf B_{|1|\oplus}P}}}}}$

\begin{theorem}
${\bf \Sigma B_{|0|\oplus}H}={\bf \Sigma B_{|1|\oplus}H}$
\end{theorem}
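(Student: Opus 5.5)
We prove the equality of the unions by showing inclusion in both directions. For each direction we prove a levelwise containment by induction on $k$, using only the relativized forms of the results already established.

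\emph{Direction ${\bf \Sigma B_{|1|\oplus}H}\subseteq{\bf \Sigma B_{|0|\oplus}H}$.} We claim ${\bf \Sigma B_{|1|\oplus}^{k}}\subseteq{\bf \Sigma B_{|0|\oplus}^{k}}$ for every $k\geq 0$. The case $k=0$ is trivial, and $k=1$ is Theorem 3. For the inductive step, assume ${\bf \Sigma B_{|1|\oplus}^{k}}\subseteq{\bf \Sigma B_{|0|\oplus}^{k}}$. Then
${\bf \Sigma B_{|1|\oplus}^{k+1}}={\bf B_{|1|\oplus}P}^{{\bf \Sigma B_{|1|\oplus}^{k}}}\subseteq{\bf B_{|1|\oplus}P}^{{\bf \Sigma B_{|0|\oplus}^{k}}}\subseteq{\bf B_{|0|\oplus}P}^{{\bf \Sigma B_{|0|\oplus}^{k}}}={\bf \Sigma B_{|0|\oplus}^{k+1}}$.
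The first inclusion is oracle monotonicity. A single oracle language in ${\bf \Sigma B_{|1|\oplus}^{k}}$ is itself in ${\bf \Sigma B_{|0|\oplus}^{k}}$, so the same machine works. The second inclusion is the relativized Theorem 3. The construction $H(x)=2^q+F(x)$ only adds a length-dependent term to a ${\bf \#P}^A$ function, so it goes through verbatim for any fixed oracle $A$.

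\emph{Direction ${\bf \Sigma B_{|0|\oplus}H}\subseteq{\bf \Sigma B_{|1|\oplus}H}$.} Here we have no direct containment ${\bf B_{|0|\oplus}P}\subseteq{\bf B_{|1|\oplus}P}$. We only have ${\bf B_{|0|\oplus}P}\subseteq{\bf P}^{{\bf B_{|1|\oplus}P}}$, so we must pay one extra level. We prove ${\bf \Sigma B_{|0|\oplus}^{k}}\subseteq{\bf \Sigma B_{|1|\oplus}^{k+1}}$.

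First observe that ${\bf P}^{C}\subseteq{\bf B_{|1|\oplus}P}^{C}$ for any oracle class $C$. A deterministic oracle computation deciding $x$ can be run on every branch, and the count is set to $1$ for acceptance and $0$ for rejection. Equivalently, this is the relativized ${\bf NP}\subseteq{\bf B_{|1|\oplus}P}$. Combining this with the relativized ${\bf B_{|0|\oplus}P}\subseteq{\bf P}^{{\bf B_{|1|\oplus}P}}$ gives ${\bf B_{|0|\oplus}P}^{A}\subseteq{\bf B_{|1|\oplus}P}^{{\bf B_{|1|\oplus}P}^{A}}$.

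The base case $k=1$ then gives ${\bf B_{|0|\oplus}P}\subseteq{\bf P}^{{\bf B_{|1|\oplus}P}}\subseteq{\bf B_{|1|\oplus}P}^{{\bf B_{|1|\oplus}P}}={\bf \Sigma B_{|1|\oplus}^{2}}$.

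For the inductive step, assume ${\bf \Sigma B_{|0|\oplus}^{k}}\subseteq{\bf \Sigma B_{|1|\oplus}^{k+1}}$. Then ${\bf \Sigma B_{|0|\oplus}^{k+1}}={\bf B_{|0|\oplus}P}^{{\bf \Sigma B_{|0|\oplus}^{k}}}\subseteq{\bf B_{|0|\oplus}P}^{{\bf \Sigma B_{|1|\oplus}^{k+1}}}$ by monotonicity. This is contained in ${\bf P}^{{\bf B_{|1|\oplus}P}^{{\bf \Sigma B_{|1|\oplus}^{k+1}}}}={\bf P}^{{\bf \Sigma B_{|1|\oplus}^{k+2}}}$ by the relativized result. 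Finally, this is contained in ${\bf B_{|1|\oplus}P}^{{\bf \Sigma B_{|1|\oplus}^{k+2}}}={\bf \Sigma B_{|1|\oplus}^{k+3}}$.

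This overshoots by one level, which is harmless for unions. A cleaner bookkeeping is to prove ${\bf \Sigma B_{|0|\oplus}^{k}}\subseteq{\bf \Sigma B_{|1|\oplus}^{2k}}$, which suffices.

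The main obstacle is justifying the relativized facts: that ${\bf B_{|0|\oplus}P}\subseteq{\bf P}^{{\bf B_{|1|\oplus}P}}$ and Theorem 3 hold relative to an arbitrary oracle language from a class, rather than a single fixed oracle. A related point is that a ${\bf P}$ or ${\bf B_{|1|\oplus}P}$ machine querying one oracle language from ${\bf \Sigma B_{|1|\oplus}^{k+1}}$ yields a language in the next level. We rely on the section's standing assumption that the earlier constructions relativize, since all of them are arithmetic manipulations of ${\bf \#P}$ functions. We also note that the tagged queries used in the ${\bf P}^{{\bf B_{|1|\oplus}P}}$ simulations combine into one oracle language, so each level is generated by a single oracle.
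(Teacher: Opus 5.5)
Your proposal is correct. The first direction is exactly the paper's argument. The second direction reaches the union equality by a looser route than the paper.

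Write ${\bf C_0}={\bf B_{|0|\oplus}P}$ and ${\bf C_1}={\bf B_{|1|\oplus}P}$. The induction you announce does not close as stated. From the hypothesis ${\bf \Sigma B_{|0|\oplus}^{k}}\subseteq{\bf \Sigma B_{|1|\oplus}^{k+1}}$, your chain gives ${\bf \Sigma B_{|0|\oplus}^{k+1}}\subseteq{\bf \Sigma B_{|1|\oplus}^{k+3}}$, not ${\bf \Sigma B_{|1|\oplus}^{k+2}}$. The shift therefore grows by one each step, so this is not merely a harmless overshoot of a fixed claim. What your argument actually proves is the $2k$ invariant ${\bf \Sigma B_{|0|\oplus}^{k}}\subseteq{\bf \Sigma B_{|1|\oplus}^{2k}}$: the base $k=1$ lands at level $2$, and the step takes level $2k$ to level $2k+2$. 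That invariant closes and suffices for the unions.

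The paper gets the tight one-level shift ${\bf \Sigma B_{|0|\oplus}^{k}}\subseteq{\bf \Sigma B_{|1|\oplus}^{k+1}}$ by carrying the stronger invariant ${\bf \Sigma B_{|0|\oplus}^{k}}\subseteq{\bf P}^{{\bf \Sigma B_{|1|\oplus}^{k}}}$. Its inductive step runs as follows.
\begin{itemize}
\item It applies the relativized ${\bf C_0}^{\bf A}\subseteq{\bf P}^{{\bf C_1}^{\bf A}}$ with ${\bf A}={\bf \Sigma B_{|0|\oplus}^{k}}$.
\item It replaces that oracle by ${\bf P}^{{\bf \Sigma B_{|1|\oplus}^{k}}}$ using the hypothesis.
\item It absorbs the deterministic layer into the outer machine, ${\bf C_1}^{{\bf P}^{\bf A}}\subseteq{\bf C_1}^{\bf A}$, since each nondeterministic branch can run the ${\bf P}^{\bf A}$ computation itself.
\end{itemize}
Only at the very end does it use ${\bf P}^{\bf D}\subseteq{\bf C_1}^{\bf D}$, once, to turn ${\bf P}^{{\bf \Sigma B_{|1|\oplus}^{k}}}$ into ${\bf \Sigma B_{|1|\oplus}^{k+1}}$.

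Your version needs only oracle monotonicity, the relativized ${\bf B_{|0|\oplus}P}\subseteq{\bf P}^{{\bf B_{|1|\oplus}P}}$, and ${\bf P}^{\bf D}\subseteq{\bf C_1}^{\bf D}$. It skips the absorption step but loses a level at every induction step. The paper's version costs one extra (easy) absorption lemma and buys a sharp level-by-level bound. For the stated equality ${\bf \Sigma B_{|0|\oplus}H}={\bf \Sigma B_{|1|\oplus}H}$, either is enough.
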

\begin{proof}
Let ${\bf C_0}={\bf B_{|0|\oplus}P}$ and ${\bf C_1}={\bf B_{|1|\oplus}P}$. We know that ${\bf C_1}\subseteq{\bf C_0}$ and ${\bf C_0}\subseteq{\bf P}^{\bf C_1}$. Both containments relativize.

First, we prove ${\bf \Sigma B_{|1|\oplus}^{k}}\subseteq{\bf \Sigma B_{|0|\oplus}^{k}}$ by induction on $k$. $k=0$ and $k=1$ are trivial. Now assume ${\bf \Sigma B_{|1|\oplus}^{k}}\subseteq{\bf \Sigma B_{|0|\oplus}^{k}}$. Then ${\bf \Sigma B_{|1|\oplus}^{k+1}}={\bf C_1}^{{\bf \Sigma B_{|1|\oplus}^{k}}}\subseteq{\bf C_0}^{{\bf \Sigma B_{|0|\oplus}^{k}}}={\bf \Sigma B_{|0|\oplus}^{k+1}}$. Therefore, ${\bf \Sigma B_{|1|\oplus}^{k}}\subseteq{\bf \Sigma B_{|0|\oplus}^{k}}$ for every $k\geq 0$.

Next, we prove the shifted containment ${\bf \Sigma B_{|0|\oplus}^{k}}\subseteq{\bf \Sigma B_{|1|\oplus}^{k+1}}$ for every $k\geq 1$. It is enough to prove the stronger statement ${\bf \Sigma B_{|0|\oplus}^{k}}\subseteq{\bf P}^{{\bf \Sigma B_{|1|\oplus}^{k}}}$ for every $k\geq 1$. $k=1$ is trivial. Now assume ${\bf \Sigma B_{|0|\oplus}^{k}}\subseteq{\bf P}^{{\bf \Sigma B_{|1|\oplus}^{k}}}$. We prove ${\bf \Sigma B_{|0|\oplus}^{k+1}}\subseteq{\bf P}^{{\bf \Sigma B_{|1|\oplus}^{k+1}}}$. By definition, ${\bf \Sigma B_{|0|\oplus}^{k+1}}={\bf C_0}^{{\bf \Sigma B_{|0|\oplus}^{k}}}$. Since ${\bf C_0}^{\bf A}\subseteq{\bf P}^{{\bf C_1}^{\bf A}}$ for every oracle class ${\bf A}$, we get ${\bf \Sigma B_{|0|\oplus}^{k+1}}\subseteq{\bf P}^{{\bf C_1}^{{\bf \Sigma B_{|0|\oplus}^{k}}}}$. By the induction hypothesis, an oracle for ${\bf \Sigma B_{|0|\oplus}^{k}}$ can be simulated by a deterministic polynomial time machine with oracle ${\bf \Sigma B_{|1|\oplus}^{k}}$. Thus ${\bf C_1}^{{\bf \Sigma B_{|0|\oplus}^{k}}}\subseteq{\bf C_1}^{{\bf P}^{{\bf \Sigma B_{|1|\oplus}^{k}}}}$. Since deterministic polynomial time oracle computations can be absorbed into the oracle access, ${\bf C_1}^{{\bf P}^{{\bf \Sigma B_{|1|\oplus}^{k}}}}\subseteq{\bf C_1}^{{\bf \Sigma B_{|1|\oplus}^{k}}}={\bf \Sigma B_{|1|\oplus}^{k+1}}$. Therefore, ${\bf \Sigma B_{|0|\oplus}^{k+1}}\subseteq{\bf P}^{{\bf \Sigma B_{|1|\oplus}^{k+1}}}$, which proves the stronger statement by induction. Finally, since ${\bf P}^{\bf A}\subseteq{\bf C_1}^{\bf A}$ for every oracle class ${\bf A}$, we have ${\bf \Sigma B_{|0|\oplus}^{k}}\subseteq{\bf P}^{{\bf \Sigma B_{|1|\oplus}^{k}}}\subseteq{\bf C_1}^{{\bf \Sigma B_{|1|\oplus}^{k}}}={\bf \Sigma B_{|1|\oplus}^{k+1}}$ for every $k\geq 1$. 

Therefore, every level of the $\Sigma$-${\bf B_{|1|\oplus}P}$ hierarchy is contained in the corresponding level of the $\Sigma$-${\bf B_{|0|\oplus}P}$ hierarchy, and every positive level of the $\Sigma$-${\bf B_{|0|\oplus}P}$ hierarchy is contained in the next level of the $\Sigma$-${\bf B_{|1|\oplus}P}$ hierarchy. Taking unions over all levels gives ${\bf \Sigma B_{|0|\oplus}H}={\bf \Sigma B_{|1|\oplus}H}$.$\qed$
\end{proof}

\begin{theorem}
 ${\bf \Delta B_{|0|\oplus}H}={\bf \Delta B_{|1|\oplus}H}$
\end{theorem}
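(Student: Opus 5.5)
The plan is to reuse the level-by-level containments established in the proof of the preceding theorem (${\bf \Sigma B_{|0|\oplus}H}={\bf \Sigma B_{|1|\oplus}H}$) and lift them through one outer deterministic polynomial time machine. As before, write ${\bf C_0}={\bf B_{|0|\oplus}P}$ and ${\bf C_1}={\bf B_{|1|\oplus}P}$. We will show two level-wise containments:
\[
{\bf \Delta B_{|1|\oplus}^{k}}\subseteq{\bf \Delta B_{|0|\oplus}^{k}}
\qquad\mbox{and}\qquad
{\bf \Delta B_{|0|\oplus}^{k}}\subseteq{\bf \Delta B_{|1|\oplus}^{k}} .
\]
Taking unions over $k$ will then give the theorem.

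\emph{First containment.} The previous proof showed by induction that ${\bf \Sigma B_{|1|\oplus}^{k}}\subseteq{\bf \Sigma B_{|0|\oplus}^{k}}$ for every $k\geq 0$. Since ${\bf P}^{\bf A}\subseteq{\bf P}^{\bf B}$ whenever ${\bf A}\subseteq{\bf B}$, we immediately get ${\bf \Delta B_{|1|\oplus}^{k}}={\bf P}^{{\bf \Sigma B_{|1|\oplus}^{k}}}\subseteq{\bf P}^{{\bf \Sigma B_{|0|\oplus}^{k}}}={\bf \Delta B_{|0|\oplus}^{k}}$. The case $k=0$ is trivial, since both sides equal ${\bf P}$.

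\emph{Second containment.} Here we use the stronger statement proved inside the previous theorem: ${\bf \Sigma B_{|0|\oplus}^{k}}\subseteq{\bf P}^{{\bf \Sigma B_{|1|\oplus}^{k}}}$ for all $k\geq 1$. It follows that
\[
{\bf \Delta B_{|0|\oplus}^{k}}={\bf P}^{{\bf \Sigma B_{|0|\oplus}^{k}}}\subseteq{\bf P}^{{\bf P}^{{\bf \Sigma B_{|1|\oplus}^{k}}}}={\bf P}^{{\bf \Sigma B_{|1|\oplus}^{k}}}={\bf \Delta B_{|1|\oplus}^{k}} .
\]
The middle equality ${\bf P}^{{\bf P}^{\bf A}}={\bf P}^{\bf A}$ is the standard absorption property. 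The outer machine simulates each query to the inner ${\bf P}^{\bf A}$ machine directly. Each inner run takes polynomial time in the query length, the query length is polynomial in $|x|$, and polynomially many queries are made, so the total time remains polynomial. We obtain equality level by level, not merely a shift by one level.

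The main point needing care is that this second step relies on the intermediate claim ${\bf \Sigma B_{|0|\oplus}^{k}}\subseteq{\bf P}^{{\bf \Sigma B_{|1|\oplus}^{k}}}$ from the preceding proof, and not merely its corollary that ${\bf \Sigma B_{|0|\oplus}^{k}}\subseteq{\bf \Sigma B_{|1|\oplus}^{k+1}}$. If one preferred to rely only on that corollary, the fallback would be ${\bf \Delta B_{|0|\oplus}^{k}}\subseteq{\bf P}^{{\bf \Sigma B_{|1|\oplus}^{k+1}}}={\bf \Delta B_{|1|\oplus}^{k+1}}$, which still suffices after taking unions. I will state the argument using the stronger claim, note the fallback, and conclude ${\bf \Delta B_{|0|\oplus}H}={\bf \Delta B_{|1|\oplus}H}$.
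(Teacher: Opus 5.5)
Your proof is correct, and it takes a different route from the paper. The paper runs a fresh induction on the $\Delta$ levels. Assuming ${\bf P}^{{\bf \Sigma B_{|0|\oplus}^{k}}}={\bf P}^{{\bf \Sigma B_{|1|\oplus}^{k}}}$, it first applies the relativized Turing equivalence ${\bf P}^{{\bf C_0}^{\bf A}}={\bf P}^{{\bf C_1}^{\bf A}}$ with ${\bf A}={\bf \Sigma B_{|0|\oplus}^{k}}$. It then swaps the Turing-equivalent oracle ${\bf \Sigma B_{|0|\oplus}^{k}}$ for ${\bf \Sigma B_{|1|\oplus}^{k}}$ underneath ${\bf C_1}$.

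You need no new induction. You observe that the two level-wise facts already proved inside the argument for ${\bf \Sigma B_{|0|\oplus}H}={\bf \Sigma B_{|1|\oplus}H}$ say the $\Sigma$ levels are Turing equivalent level by level. Those facts are ${\bf \Sigma B_{|1|\oplus}^{k}}\subseteq{\bf \Sigma B_{|0|\oplus}^{k}}$ and the intermediate claim ${\bf \Sigma B_{|0|\oplus}^{k}}\subseteq{\bf P}^{{\bf \Sigma B_{|1|\oplus}^{k}}}$. One outer ${\bf P}$ together with ${\bf P}^{{\bf P}^{\bf A}}={\bf P}^{\bf A}$ then finishes.

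Your version is shorter, and its oracle substitution happens only at the outermost deterministic level, where absorption is completely standard. The paper's step swaps oracles beneath a ${\bf B_{|1|\oplus}P}$ computation. That really uses ${\bf C_1}^{{\bf P}^{\bf A}}\subseteq{\bf C_1}^{\bf A}$, even though the paper describes it as a replacement ``inside an outer deterministic polynomial time computation''. The inclusion is true, since the underlying nondeterministic machine can run the deterministic oracle computation itself.

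In exchange, the paper's argument cites only the stated relativized Turing equivalence, not an intermediate claim from an earlier proof. That claim is explicitly stated and proved there, and its proof uses the same ${\bf C_1}$ absorption, so the underlying ingredients coincide and citing it is legitimate. The difference is organizational. Your fallback through ${\bf \Delta B_{|1|\oplus}^{k+1}}$ is also sound.
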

\begin{proof}
Let ${\bf C_0}={\bf B_{|0|\oplus}P}$ and ${\bf C_1}={\bf B_{|1|\oplus}P}$. We know that ${\bf C_0}$ and ${\bf C_1}$ are Turing equivalent, meaning ${\bf P}^{\bf C_0}={\bf P}^{\bf C_1}$. Moreover, this equivalence relativizes as for every oracle class ${\bf A}$, ${\bf P}^{{\bf C_0}^{\bf A}}={\bf P}^{{\bf C_1}^{\bf A}}$.

We prove by induction on $k$ that ${\bf \Delta B_{|0|\oplus}^{k}}={\bf \Delta B_{|1|\oplus}^{k}}$. $k=0$ and $k=1$ are trivial. Now assume ${\bf \Delta B_{|0|\oplus}^{k}}={\bf \Delta B_{|1|\oplus}^{k}}$. Equivalently, ${\bf P}^{{\bf \Sigma B_{|0|\oplus}^{k}}}={\bf P}^{{\bf \Sigma B_{|1|\oplus}^{k}}}$. We show that ${\bf \Delta B_{|0|\oplus}^{k+1}}={\bf \Delta B_{|1|\oplus}^{k+1}}$. By definition, ${\bf \Delta B_{|0|\oplus}^{k+1}}={\bf P}^{{\bf \Sigma B_{|0|\oplus}^{k+1}}}={\bf P}^{{\bf C_0}^{{\bf \Sigma B_{|0|\oplus}^{k}}}}$. By the relativized Turing equivalence of ${\bf C_0}$ and ${\bf C_1}$, this equals ${\bf P}^{{\bf C_1}^{{\bf \Sigma B_{|0|\oplus}^{k}}}}$. Since ${\bf P}^{{\bf \Sigma B_{|0|\oplus}^{k}}}={\bf P}^{{\bf \Sigma B_{|1|\oplus}^{k}}}$ by the induction hypothesis, the oracle ${\bf \Sigma B_{|0|\oplus}^{k}}$ can be replaced by the Turing-equivalent oracle ${\bf \Sigma B_{|1|\oplus}^{k}}$ inside an outer deterministic polynomial time computation. As a result, ${\bf P}^{{\bf C_1}^{{\bf \Sigma B_{|0|\oplus}^{k}}}}={\bf P}^{{\bf C_1}^{{\bf \Sigma B_{|1|\oplus}^{k}}}}={\bf P}^{{\bf \Sigma B_{|1|\oplus}^{k+1}}}={\bf \Delta B_{|1|\oplus}^{k+1}}$. Thus, ${\bf \Delta B_{|0|\oplus}^{k+1}}={\bf \Delta B_{|1|\oplus}^{k+1}}$. Then by induction ${\bf \Delta B_{|0|\oplus}^{k}}={\bf \Delta B_{|1|\oplus}^{k}}$ for every $k\geq 0$. Taking unions over all levels gives ${\bf \Delta B_{|0|\oplus}H}={\bf \Delta B_{|1|\oplus}H}$.$\qed$
\end{proof}

\begin{theorem}
${\bf PH}\subseteq{\bf \Sigma B_{|0|\oplus}H}$ and ${\bf PH}\subseteq{\bf \Sigma B_{|1|\oplus}H}$
\end{theorem}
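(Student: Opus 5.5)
We prove the statement level by level, showing by induction on $k$ that ${\bf \Sigma_k^P}\subseteq{\bf \Sigma B_{|1|\oplus}^{k}}$ for every $k\geq 0$. Since the $\Pi$ and $\Delta$ levels of ${\bf PH}$ are contained in ${\bf \Sigma_{k+1}^P}$, this gives ${\bf PH}=\bigcup_k{\bf \Sigma_k^P}\subseteq{\bf \Sigma B_{|1|\oplus}H}$. The second half then follows either by repeating the argument with ${\bf B_{|0|\oplus}P}$ in place of ${\bf B_{|1|\oplus}P}$ (using ${\bf NP}\subseteq{\bf B_{|0|\oplus}P}$), or directly from Theorem 10, which gives ${\bf \Sigma B_{|0|\oplus}H}={\bf \Sigma B_{|1|\oplus}H}$.

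The base cases are immediate. For $k=0$ both classes equal ${\bf P}$. For $k=1$ we have ${\bf \Sigma_1^P}={\bf NP}\subseteq{\bf B_{|1|\oplus}P}={\bf \Sigma B_{|1|\oplus}^{1}}$, which is one of the containments proven in \cite{P26}.

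For the inductive step, assume ${\bf \Sigma_k^P}\subseteq{\bf \Sigma B_{|1|\oplus}^{k}}$ with $k\geq 1$. Then
\[
{\bf \Sigma_{k+1}^P}={\bf NP}^{{\bf \Sigma_k^P}}\subseteq{\bf NP}^{{\bf \Sigma B_{|1|\oplus}^{k}}}\subseteq{\bf B_{|1|\oplus}P}^{{\bf \Sigma B_{|1|\oplus}^{k}}}={\bf \Sigma B_{|1|\oplus}^{k+1}}.
\]
The first containment is monotonicity of oracle classes: any oracle language in ${\bf \Sigma_k^P}$ is also a language in ${\bf \Sigma B_{|1|\oplus}^{k}}$, so the same ${\bf NP}$ oracle machine can use it. The second containment is the relativized form of ${\bf NP}\subseteq{\bf B_{|1|\oplus}P}$ with the fixed oracle taken from ${\bf \Sigma B_{|1|\oplus}^{k}}$. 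Finally, the ${\bf \Pi}$ levels are handled by ${\bf \Pi_{k}^P}={\rm co}{\bf \Sigma_k^P}\subseteq{\bf \Sigma_{k+1}^P}$, and the ${\bf \Delta}$ levels by ${\bf \Delta_{k+1}^P}={\bf P}^{{\bf \Sigma_k^P}}\subseteq{\bf \Sigma_{k+1}^P}$, so no separate treatment is needed.

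The main obstacle is justifying that ${\bf NP}\subseteq{\bf B_{|1|\oplus}P}$ relativizes. The stated convention at the start of this section grants this. To make it concrete, I would check that the construction from \cite{P26} uses the ${\bf NP}$ machine only as a counting function $A(x)$ and applies closure operations of ${\bf \#P}$: multiplication by $2^r-1$, addition of length-dependent terms, and products. The map $A\mapsto(2^r-1)A$ used in the ${\bf US}$ proof above is one example: $B_1((2^r-1)A)$ is odd exactly when $A>0$. These operations are purely combinatorial manipulations of nondeterministic paths, so they remain valid for ${\bf \#P}^{\bf A}$ for any oracle ${\bf A}$. With that checked, the induction goes through.
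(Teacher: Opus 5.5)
Your proof is correct and follows essentially the paper's argument. Both run an induction on $k$ showing ${\bf \Sigma_k^P}\subseteq{\bf \Sigma C}^{k}$ via ${\bf \Sigma_{k+1}^P}={\bf NP}^{{\bf \Sigma_k^P}}\subseteq{\bf NP}^{{\bf \Sigma C}^{k}}\subseteq{\bf C}^{{\bf \Sigma C}^{k}}$, using the relativized containment ${\bf NP}\subseteq{\bf C}$ for either parity based bit-counting class ${\bf C}$. Your separate base case at $k=1$ actually matches the paper's hierarchy definition a bit more faithfully than the paper's own proof, which starts the recursion at ${\bf C}^{\bf P}$. Your remarks on the $\Pi$ and $\Delta$ levels are harmless but unnecessary, since ${\bf PH}=\bigcup_k{\bf \Sigma_k^P}$ by definition.
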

\begin{proof}
Let ${\bf C}$ be either ${\bf B_{|0|\oplus}P}$ or ${\bf B_{|1|\oplus}P}$. We know that ${\bf NP}\subseteq{\bf C}$. This containment relativizes, so for every oracle class ${\bf A}$, we have ${\bf NP}^{\bf A}\subseteq{\bf C}^{\bf A}$.
Let ${\bf \Sigma C}^{0}={\bf P}$ and, for every $k\geq 0$, let ${\bf \Sigma C}^{k+1}={\bf C}^{{\bf \Sigma C}^{k}}$. We prove by induction that ${\bf \Sigma_k^P}\subseteq{\bf \Sigma C}^{k}$ for every $k\geq 0$. For $k=0$, we have ${\bf \Sigma_0^P}={\bf P}={\bf \Sigma C}^{0}$. Now assume ${\bf \Sigma_k^P}\subseteq{\bf \Sigma C}^{k}$. Since ${\bf \Sigma_{k+1}^P}={\bf NP}^{{\bf \Sigma_k^P}}$, the induction hypothesis gives ${\bf \Sigma_{k+1}^P}\subseteq{\bf NP}^{{\bf \Sigma C}^{k}}$. Since ${\bf NP}^{\bf A}\subseteq{\bf C}^{\bf A}$ for every oracle class ${\bf A}$, we get ${\bf NP}^{{\bf \Sigma C}^{k}}\subseteq{\bf C}^{{\bf \Sigma C}^{k}}={\bf \Sigma C}^{k+1}$. Thus, ${\bf \Sigma_{k+1}^P}\subseteq{\bf \Sigma C}^{k+1}$. Then by induction, ${\bf \Sigma_k^P}\subseteq{\bf \Sigma C}^{k}$ for every $k\geq 0$.

Since ${\bf PH}=\bigcup_{k\geq 0}{\bf \Sigma_k^P}$, every language in ${\bf PH}$ appears at some finite level ${\bf \Sigma_k^P}$ and appears at the corresponding finite level ${\bf \Sigma C}^{k}$. Therefore, ${\bf PH}\subseteq{\bf \Sigma C H}$. Taking ${\bf C}={\bf B_{|0|\oplus}P}$ gives ${\bf PH}\subseteq{\bf \Sigma B_{|0|\oplus}H}$, and taking ${\bf C}={\bf B_{|1|\oplus}P}$ gives ${\bf PH}\subseteq{\bf \Sigma B_{|1|\oplus}H}$.$\qed$
\end{proof}

\begin{theorem}
${\bf PH}\subseteq{\bf \Delta B_{|0|\oplus}H}$ and ${\bf PH}\subseteq{\bf \Delta B_{|1|\oplus}H}$
\end{theorem}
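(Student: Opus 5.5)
The plan is to derive the statement from the preceding theorem, ${\bf PH}\subseteq{\bf \Sigma B_{|0|\oplus}H}$ and ${\bf PH}\subseteq{\bf \Sigma B_{|1|\oplus}H}$, using the fact that each $\Delta$ level sits above the corresponding $\Sigma$ level. As in that proof, let ${\bf C}$ denote either ${\bf B_{|0|\oplus}P}$ or ${\bf B_{|1|\oplus}P}$, and let ${\bf \Sigma C}^{k}$ be the $k$-th level of the associated $\Sigma$ hierarchy. The corresponding $\Delta$ levels are then ${\bf \Delta C}^{0}={\bf P}$ and ${\bf \Delta C}^{k}={\bf P}^{{\bf \Sigma C}^{k}}$ for $k\geq 1$.

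First I will show ${\bf \Sigma C}^{k}\subseteq{\bf \Delta C}^{k}$ for every $k\geq 0$. For $k=0$ both classes are ${\bf P}$. For $k\geq 1$, any language $A\in{\bf \Sigma C}^{k}$ is decided by the deterministic polynomial time oracle machine that asks the oracle $A$ one question, namely whether its input $x$ is in $A$, and copies the answer. Hence $A\in{\bf P}^{{\bf \Sigma C}^{k}}={\bf \Delta C}^{k}$. Taking unions over all $k$ gives ${\bf \Sigma C H}\subseteq{\bf \Delta C H}$.

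Second, I combine this with the previous theorem. That theorem gives ${\bf \Sigma_k^P}\subseteq{\bf \Sigma C}^{k}$ for every $k$, so ${\bf \Sigma_k^P}\subseteq{\bf \Delta C}^{k}$. Since ${\bf PH}=\bigcup_{k\geq 0}{\bf \Sigma_k^P}$, every language in ${\bf PH}$ lies in some ${\bf \Delta C}^{k}$, and therefore ${\bf PH}\subseteq{\bf \Delta C H}$. Setting ${\bf C}={\bf B_{|0|\oplus}P}$ and then ${\bf C}={\bf B_{|1|\oplus}P}$ yields both claims.

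The argument has no real obstacle; the only point needing care is to make explicit that ${\bf P}^{\bf A}\supseteq{\bf A}$ for each level via the single-query machine. A sharper level-by-level bound would also be available, namely ${\bf \Delta_{k+1}^P}={\bf P}^{{\bf \Sigma_k^P}}\subseteq{\bf P}^{{\bf \Sigma C}^{k}}={\bf \Delta C}^{k}$ for $k\geq1$. This uses the fact that replacing an oracle language by one from a larger class keeps the deterministic machine unchanged. It is not needed for the union statement, however.
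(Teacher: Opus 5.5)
Your proof is correct, but it is organized differently from the paper's. The paper does not invoke the preceding $\Sigma$ theorem. Instead it runs a fresh induction with the weaker hypothesis ${\bf \Sigma_k^P}\subseteq{\bf \Delta C}^{k}={\bf P}^{{\bf \Sigma C}^{k}}$. Its inductive step therefore has three parts:
\begin{itemize}
\item it absorbs the deterministic layer, ${\bf NP}^{{\bf P}^{{\bf \Sigma C}^{k}}}\subseteq{\bf NP}^{{\bf \Sigma C}^{k}}$;
\item it applies the relativized containment ${\bf NP}^{\bf A}\subseteq{\bf C}^{\bf A}$;
\item it uses the trivial inclusion ${\bf \Sigma C}^{k+1}\subseteq{\bf P}^{{\bf \Sigma C}^{k+1}}$.
\end{itemize}
You instead derive the result as a corollary. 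You combine ${\bf PH}\subseteq{\bf \Sigma C H}$ from the previous theorem with the single-query inclusion ${\bf \Sigma C}^{k}\subseteq{\bf P}^{{\bf \Sigma C}^{k}}$.

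Your route is shorter and avoids the absorption step entirely. You only need the union form of the previous theorem, not its level-by-level form, because ${\bf \Sigma C H}\subseteq{\bf \Delta C H}$ already yields the claim. The paper's version stands on its own without citing the earlier proof, at the cost of essentially repeating the same induction.

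Your closing remark that ${\bf \Delta_{k+1}^P}\subseteq{\bf \Delta C}^{k}$ for $k\geq 1$ is also correct. It is a slightly sharper level-wise statement than the paper's ${\bf \Sigma_k^P}\subseteq{\bf \Delta C}^{k}$.
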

\begin{proof}
Let ${\bf C}$ be either ${\bf B_{|0|\oplus}P}$ or ${\bf B_{|1|\oplus}P}$. We know that ${\bf NP}\subseteq{\bf C}$. This containment relativizes, so for every oracle class ${\bf A}$, we have ${\bf NP}^{\bf A}\subseteq{\bf C}^{\bf A}$.
We prove by induction that ${\bf \Sigma_k^P}\subseteq{\bf \Delta C}^{k}$ for every $k\geq 0$, where ${\bf \Delta C}^{k}$ denotes the $k$th $\Delta$-level generated from ${\bf C}$. For $k=0$, we have ${\bf \Sigma_0^P}={\bf P}={\bf \Delta C}^{0}$. Now assume ${\bf \Sigma_k^P}\subseteq{\bf \Delta C}^{k}$. By definition, ${\bf \Delta C}^{k}={\bf P}^{{\bf \Sigma C}^{k}}$, where ${\bf \Sigma C}^{k}$ is the $k$th $\Sigma$-level generated from ${\bf C}$. Thus, ${\bf \Sigma_k^P}\subseteq{\bf P}^{{\bf \Sigma C}^{k}}$.
Now consider ${\bf \Sigma_{k+1}^P}$. Since ${\bf \Sigma_{k+1}^P}={\bf NP}^{{\bf \Sigma_k^P}}$, the induction hypothesis gives ${\bf \Sigma_{k+1}^P}\subseteq{\bf NP}^{{\bf P}^{{\bf \Sigma C}^{k}}}$. A non-deterministic polynomial time machine with a ${\bf P}^{{\bf \Sigma C}^{k}}$ oracle can simulate that oracle by making polynomially many queries to ${\bf \Sigma C}^{k}$. Therefore, ${\bf NP}^{{\bf P}^{{\bf \Sigma C}^{k}}}\subseteq{\bf NP}^{{\bf \Sigma C}^{k}}$. Since ${\bf NP}^{\bf A}\subseteq{\bf C}^{\bf A}$ for every oracle class ${\bf A}$, we get ${\bf NP}^{{\bf \Sigma C}^{k}}\subseteq{\bf C}^{{\bf \Sigma C}^{k}}={\bf \Sigma C}^{k+1}$. Finally, ${\bf \Sigma C}^{k+1}\subseteq{\bf P}^{{\bf \Sigma C}^{k+1}}={\bf \Delta C}^{k+1}$. Thus ${\bf \Sigma_{k+1}^P}\subseteq{\bf \Delta C}^{k+1}$.
Then by induction, ${\bf \Sigma_k^P}\subseteq{\bf \Delta C}^{k}$ for every $k\geq 0$. 

Since ${\bf PH}=\bigcup_{k\geq 0}{\bf \Sigma_k^P}$, every language in ${\bf PH}$ appears at some finite level ${\bf \Sigma_k^P}$, and appears at the corresponding finite level ${\bf \Delta C}^{k}$. Therefore, ${\bf PH}\subseteq{\bf \Delta C}H$. Taking ${\bf C}={\bf B_{|0|\oplus}P}$ gives ${\bf PH}\subseteq{\bf \Delta B_{|0|\oplus}H}$, and taking ${\bf C}={\bf B_{|1|\oplus}P}$ gives ${\bf PH}\subseteq{\bf \Delta B_{|1|\oplus}H}$.$\qed$
\end{proof}

\subsection{Alternating parity based bit-counting hierarchies}

We revise the parity based bit-counting hierarchies we introduced in the previous subsection so that now, if the $k^{th}$ level is even, then that level has the opposite parity based bit-counting complexity class from the one that defines it.

\begin{definition}\normalfont
The alternating-$\Sigma$-${\bf B_{|0|\oplus}P}$ hierarchy, denoted ${\bf Alt\Sigma B_{|0|\oplus}H}$, is defined as follows. The zeroth level is ${\bf Alt\Sigma B_{|0|\oplus}^{0}}={\bf P}$. For every $k\geq 1$, the $k$th level ${\bf Alt\Sigma B_{|0|\oplus}^{k}}$ is the right-associated alternating oracle tower of length $k$ whose outermost class is ${\bf B_{|0|\oplus}P}$ and whose oracle classes alternate between ${\bf B_{|1|\oplus}P}$ and ${\bf B_{|0|\oplus}P}$, such that if $k$ is odd then that level has 
${\bf B_{|0|\oplus}P}$ and if $k$ is even then that level has ${\bf B_{|1|\oplus}P}$. The full hierarchy is ${\bf Alt\Sigma B_{|0|\oplus}H}=\bigcup_{k\geq 0}{\bf Alt\Sigma B_{|0|\oplus}^{k}}$.

The subsequent few levels are:

${\bf Alt\Sigma B_{|0|\oplus}^{2}}={\bf B_{|0|\oplus}P}^{{\bf B_{|1|\oplus}P}}$

${\bf Alt\Sigma B_{|0|\oplus}^{3}}={\bf B_{|0|\oplus}P}^{{\bf B_{|1|\oplus}P}^{{\bf B_{|0|\oplus}P}}}$

${\bf Alt\Sigma B_{|0|\oplus}^{4}}={\bf B_{|0|\oplus}P}^{{\bf B_{|1|\oplus}P}^{{\bf B_{|0|\oplus}P}^{{\bf B_{|1|\oplus}P}}}}$

\end{definition}

\begin{definition}\normalfont
The alternating-$\Sigma$-${\bf B_{|1|\oplus}P}$ hierarchy, denoted ${\bf Alt\Sigma B_{|1|\oplus}H}$, is defined as follows. The zeroth level is ${\bf Alt\Sigma B_{|1|\oplus}^{0}}={\bf P}$. For every $k\geq 1$, the $k$th level ${\bf Alt\Sigma B_{|1|\oplus}^{k}}$ is the right-associated alternating oracle tower of length $k$ whose outermost class is ${\bf B_{|1|\oplus}P}$ and whose oracle classes alternate between ${\bf B_{|0|\oplus}P}$ and ${\bf B_{|1|\oplus}P}$, such that if $k$ is odd then that level has 
${\bf B_{|1|\oplus}P}$ and if $k$ is even then that level has ${\bf B_{|0|\oplus}P}$. The full hierarchy is ${\bf Alt\Sigma B_{|1|\oplus}H}=\bigcup_{k\geq 0}{\bf Alt\Sigma B_{|1|\oplus}^{k}}$.

The subsequent few levels are:

${\bf Alt\Sigma B_{|1|\oplus}^{2}}={\bf B_{|1|\oplus}P}^{{\bf B_{|0|\oplus}P}}$

${\bf Alt\Sigma B_{|1|\oplus}^{3}}={\bf B_{|1|\oplus}P}^{{\bf B_{|0|\oplus}P}^{{\bf B_{|1|\oplus}P}}}$

${\bf Alt\Sigma B_{|1|\oplus}^{4}}={\bf B_{|1|\oplus}P}^{{\bf B_{|0|\oplus}P}^{{\bf B_{|1|\oplus}P}^{{\bf B_{|0|\oplus}P}}}}$

\end{definition}

\begin{definition}\normalfont
The alternating-$\Delta$-${\bf B_{|0|\oplus}P}$ hierarchy, denoted ${\bf Alt\Delta B_{|0|\oplus}H}$, is defined as follows. The zeroth level is ${\bf Alt\Delta B_{|0|\oplus}^{0}}={\bf P}$. For every $k\geq 1$, define ${\bf Alt\Delta B_{|0|\oplus}^{k}}={\bf P}^{{\bf Alt\Sigma B_{|0|\oplus}^{k}}}$. The full hierarchy is ${\bf Alt\Delta B_{|0|\oplus}H}=\bigcup_{k\geq 0}{\bf Alt\Delta B_{|0|\oplus}^{k}}$.

The subsequent few levels are:

-${\bf Alt\Delta B_{|0|\oplus}^{1}}={\bf P}^{{\bf B_{|0|\oplus}P}}$ 

-${\bf Alt\Delta B_{|0|\oplus}^{2}}={\bf P}^{{\bf B_{|0|\oplus}P}^{{\bf B_{|1|\oplus}P}}}$ 

-${\bf Alt\Delta B_{|0|\oplus}^{3}}={\bf P}^{{\bf B_{|0|\oplus}P}^{{\bf B_{|1|\oplus}P}^{{\bf B_{|0|\oplus}P}}}}$ 

-${\bf Alt\Delta B_{|0|\oplus}^{4}}={\bf P}^{{\bf B_{|0|\oplus}P}^{{\bf B_{|1|\oplus}P}^{{\bf B_{|0|\oplus}P}^{{\bf B_{|1|\oplus}P}}}}}$

\end{definition}

\begin{definition}\normalfont
The alternating-$\Delta$-${\bf B_{|1|\oplus}P}$ hierarchy, denoted ${\bf Alt\Delta B_{|1|\oplus}H}$, is defined as follows. The zeroth level is ${\bf Alt\Delta B_{|1|\oplus}^{0}}={\bf P}$. For every $k\geq 1$, define ${\bf Alt\Delta B_{|1|\oplus}^{k}}={\bf P}^{{\bf Alt\Sigma B_{|1|\oplus}^{k}}}$. The full hierarchy is ${\bf Alt\Delta B_{|1|\oplus}H}=\bigcup_{k\geq 0}{\bf Alt\Delta B_{|1|\oplus}^{k}}$.

The subsequent few levels are:

-${\bf Alt\Delta B_{|1|\oplus}^{1}}={\bf P}^{{\bf B_{|1|\oplus}P}}$

-${\bf Alt\Delta B_{|1|\oplus}^{2}}={\bf P}^{{\bf B_{|1|\oplus}P}^{{\bf B_{|0|\oplus}P}}}$ 

-${\bf Alt\Delta B_{|1|\oplus}^{3}}={\bf P}^{{\bf B_{|1|\oplus}P}^{{\bf B_{|0|\oplus}P}^{{\bf B_{|1|\oplus}P}}}}$

-${\bf Alt\Delta B_{|1|\oplus}^{4}}={\bf P}^{{\bf B_{|1|\oplus}P}^{{\bf B_{|0|\oplus}P}^{{\bf B_{|1|\oplus}P}^{{\bf B_{|0|\oplus}P}}}}}$
\end{definition}

\begin{theorem}
${\bf Alt\Sigma B_{|0|\oplus}H}={\bf Alt\Sigma B_{|1|\oplus}H}$
\end{theorem}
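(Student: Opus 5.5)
We will prove the two inclusions between the unions by sandwiching each alternating tower between towers of the other family, in the same way as in the proof that ${\bf \Sigma B_{|0|\oplus}H}={\bf \Sigma B_{|1|\oplus}H}$. Write ${\bf C_0}={\bf B_{|0|\oplus}P}$ and ${\bf C_1}={\bf B_{|1|\oplus}P}$. We will use three relativized facts: ${\bf C_1}^{\bf A}\subseteq{\bf C_0}^{\bf A}$ (Theorem 3), ${\bf C_0}^{\bf A}\subseteq{\bf P}^{{\bf C_1}^{\bf A}}$, and ${\bf P}^{\bf A}\subseteq{\bf C}^{\bf A}$ for ${\bf C}\in\{{\bf C_0},{\bf C_1}\}$. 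We will also use absorption: if ${\bf A}\subseteq{\bf P}^{\bf B}$ then ${\bf C}^{\bf A}\subseteq{\bf C}^{\bf B}$.

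\textbf{The easy direction.} A level of ${\bf Alt\Sigma B_{|1|\oplus}H}$ is a tower of length $k$ whose classes alternate, starting with ${\bf C_1}$ outermost. Prepending a ${\bf C_0}$ on the outside gives a tower of length $k+1$ of exactly the form used for ${\bf Alt\Sigma B_{|0|\oplus}^{k+1}}$. Since ${\bf P}^{\bf A}\subseteq{\bf C_0}^{\bf A}$, and any language in ${\bf A}$ can be decided by a ${\bf P}^{\bf A}$ machine making one query, we get ${\bf Alt\Sigma B_{|1|\oplus}^{k}}\subseteq{\bf Alt\Sigma B_{|0|\oplus}^{k+1}}$. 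Symmetrically, ${\bf Alt\Sigma B_{|0|\oplus}^{k}}\subseteq{\bf Alt\Sigma B_{|1|\oplus}^{k+1}}$. Taking unions over all levels gives both inclusions, and hence the theorem. So the theorem actually follows from the trivial containment ${\bf A}\subseteq{\bf C}^{\bf A}$ alone.

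\textbf{The one point that needs checking.} We must confirm that ${\bf A}\subseteq{\bf C}^{\bf A}$ holds for ${\bf C}={\bf C_0},{\bf C_1}$ as parity-based counting classes. For ${\bf C_1}$, the oracle machine queries $x$ once. It has one accepting path if the answer is yes and zero accepting paths otherwise, and $B_1$ of these counts is $1$ or $0$. For ${\bf C_0}$, we need the count to have an odd number of zeros exactly on yes-instances. The machine can arrange a count of $0$ on yes-instances and $1$ on no-instances, since $B_0(0)=1$ is odd and $B_0(1)=0$ is even. This uses a machine with a single potential accepting path, accepting iff the oracle says no. Alternatively, we can simply cite the relativized ${\bf NP}\subseteq{\bf C}$ from \cite{P26}, together with ${\bf P}^{\bf A}\subseteq{\bf NP}^{\bf A}$.

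\textbf{A stronger level-by-level statement, if desired.} We can also compare levels of the same length, as in Theorem 10. For instance, ${\bf Alt\Sigma B_{|1|\oplus}^{k}}\subseteq{\bf P}^{{\bf Alt\Sigma B_{|0|\oplus}^{k}}}$ would follow by induction. Replace the outermost ${\bf C_1}$ by ${\bf C_0}$ using Theorem 3, and replace the inner tower using the induction hypothesis after swapping the roles of $0$ and $1$, via ${\bf C_0}^{\bf A}\subseteq{\bf P}^{{\bf C_1}^{\bf A}}$ and absorption. The main obstacle in this refinement is keeping the two induction hypotheses (one for each starting class) simultaneous, and justifying absorption of the ${\bf P}$-layer into the counting oracle access. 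However, this refinement is not needed for the equality of the unions.
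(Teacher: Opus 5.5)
Your proof is correct and follows essentially the paper's route. The paper also proves the shifted containments ${\bf Alt\Sigma B_{|0|\oplus}^{k}}\subseteq{\bf Alt\Sigma B_{|1|\oplus}^{k+1}}$ and ${\bf Alt\Sigma B_{|1|\oplus}^{k}}\subseteq{\bf Alt\Sigma B_{|0|\oplus}^{k+1}}$, closing each with ${\bf P}^{\bf D}\subseteq{\bf C}^{\bf D}$, and takes unions; it reaches the step ${\bf X}\subseteq{\bf P}^{\bf X}$ (for the current alternating level ${\bf X}$) through ${\bf C_0}\subseteq{\bf P}^{\bf C_1}$ and ${\bf C_1}\subseteq{\bf P}^{\bf C_0}$, a detour you rightly observe is unnecessary since ${\bf X}\subseteq{\bf P}^{\bf X}\subseteq{\bf C}^{\bf X}$ holds directly.
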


\begin{proof}
Let ${\bf C_0}={\bf B_{|0|\oplus}P}$ and ${\bf C_1}={\bf B_{|1|\oplus}P}$. We know that ${\bf C_0}\subseteq{\bf P}^{\bf C_1}$ and ${\bf C_1}\subseteq{\bf P}^{\bf C_0}$. Both containments relativize. We also use the fact that ${\bf P}^{\bf D}\subseteq{\bf C_0}^{\bf D}$ and ${\bf P}^{\bf D}\subseteq{\bf C_1}^{\bf D}$ for every oracle class ${\bf D}$, since ${\bf P}\subseteq{\bf C_0}$ and ${\bf P}\subseteq{\bf C_1}$ relativize.

First, we prove the shifted containment ${\bf Alt\Sigma B_{|0|\oplus}^{k}}\subseteq{\bf Alt\Sigma B_{|1|\oplus}^{k+1}}$ for every $k\geq 1$. For $k=1$, this follows from ${\bf Alt\Sigma B_{|0|\oplus}^{1}}={\bf C_0}\subseteq{\bf P}^{\bf C_1}\subseteq{\bf P}^{{\bf P}^{\bf C_0}}={\bf P}^{\bf C_0}\subseteq{\bf C_1}^{\bf C_0}={\bf Alt\Sigma B_{|1|\oplus}^{2}}$, where the second containment uses ${\bf C_1}\subseteq{\bf P}^{\bf C_0}$. Now let $k\geq 2$. By definition, ${\bf Alt\Sigma B_{|0|\oplus}^{k}}={\bf C_0}^{{\bf Alt\Sigma B_{|1|\oplus}^{k-1}}}$. Since ${\bf C_0}^{\bf D}\subseteq{\bf P}^{{\bf C_1}^{\bf D}}$ for every oracle class ${\bf D}$, we get ${\bf Alt\Sigma B_{|0|\oplus}^{k}}\subseteq{\bf P}^{{\bf C_1}^{{\bf Alt\Sigma B_{|1|\oplus}^{k-1}}}}$. Since ${\bf C_1}^{\bf D}\subseteq{\bf P}^{{\bf C_0}^{\bf D}}$ for every oracle class ${\bf D}$, we have ${\bf C_1}^{{\bf Alt\Sigma B_{|1|\oplus}^{k-1}}}\subseteq{\bf P}^{{\bf C_0}^{{\bf Alt\Sigma B_{|1|\oplus}^{k-1}}}}={\bf P}^{{\bf Alt\Sigma B_{|0|\oplus}^{k}}}$. Therefore, ${\bf Alt\Sigma B_{|0|\oplus}^{k}}\subseteq{\bf P}^{{\bf P}^{{\bf Alt\Sigma B_{|0|\oplus}^{k}}}}={\bf P}^{{\bf Alt\Sigma B_{|0|\oplus}^{k}}}$. Since ${\bf P}^{\bf D}\subseteq{\bf C_1}^{\bf D}$ for every oracle class ${\bf D}$, it follows that ${\bf Alt\Sigma B_{|0|\oplus}^{k}}\subseteq{\bf C_1}^{{\bf Alt\Sigma B_{|0|\oplus}^{k}}}={\bf Alt\Sigma B_{|1|\oplus}^{k+1}}$. Thus ${\bf Alt\Sigma B_{|0|\oplus}^{k}}\subseteq{\bf Alt\Sigma B_{|1|\oplus}^{k+1}}$ for every $k\geq 1$.

Next, we prove the containment in the other direction. For $k=1$, this follows from ${\bf Alt\Sigma B_{|1|\oplus}^{1}}={\bf C_1}\subseteq{\bf P}^{\bf C_0}\subseteq{\bf P}^{{\bf P}^{\bf C_1}}={\bf P}^{\bf C_1}\subseteq{\bf C_0}^{\bf C_1}={\bf Alt\Sigma B_{|0|\oplus}^{2}}$, where the second containment uses ${\bf C_0}\subseteq{\bf P}^{\bf C_1}$. Now let $k\geq 2$. By definition, ${\bf Alt\Sigma B_{|1|\oplus}^{k}}={\bf C_1}^{{\bf Alt\Sigma B_{|0|\oplus}^{k-1}}}$. Since ${\bf C_1}^{\bf D}\subseteq{\bf P}^{{\bf C_0}^{\bf D}}$ for every oracle class ${\bf D}$, we get ${\bf Alt\Sigma B_{|1|\oplus}^{k}}\subseteq{\bf P}^{{\bf C_0}^{{\bf Alt\Sigma B_{|0|\oplus}^{k-1}}}}$. Since ${\bf C_0}^{\bf D}\subseteq{\bf P}^{{\bf C_1}^{\bf D}}$ for every oracle class ${\bf D}$, we have ${\bf C_0}^{{\bf Alt\Sigma B_{|0|\oplus}^{k-1}}}\subseteq{\bf P}^{{\bf C_1}^{{\bf Alt\Sigma B_{|0|\oplus}^{k-1}}}}={\bf P}^{{\bf Alt\Sigma B_{|1|\oplus}^{k}}}$. Therefore, ${\bf Alt\Sigma B_{|1|\oplus}^{k}}\subseteq{\bf P}^{{\bf P}^{{\bf Alt\Sigma B_{|1|\oplus}^{k}}}}={\bf P}^{{\bf Alt\Sigma B_{|1|\oplus}^{k}}}$. Since ${\bf P}^{\bf D}\subseteq{\bf C_0}^{\bf D}$ for every oracle class ${\bf D}$, it follows that ${\bf Alt\Sigma B_{|1|\oplus}^{k}}\subseteq{\bf C_0}^{{\bf Alt\Sigma B_{|1|\oplus}^{k}}}={\bf Alt\Sigma B_{|0|\oplus}^{k+1}}$. Thus ${\bf Alt\Sigma B_{|1|\oplus}^{k}}\subseteq{\bf Alt\Sigma B_{|0|\oplus}^{k+1}}$ for every $k\geq 1$.

Thus every positive level of the alternating-$\Sigma$-${\bf B_{|0|\oplus}P}$ hierarchy is contained in the next level of the alternating-$\Sigma$-${\bf B_{|1|\oplus}P}$ hierarchy, and every positive level of the alternating-$\Sigma$-${\bf B_{|1|\oplus}P}$ hierarchy is contained in the next level of the alternating-$\Sigma$-${\bf B_{|0|\oplus}P}$ hierarchy. Taking unions over all levels gives ${\bf Alt\Sigma B_{|0|\oplus}H}={\bf Alt\Sigma B_{|1|\oplus}H}$.$\qed$
\end{proof}

\begin{theorem}
${\bf Alt\Delta B_{|0|\oplus}H}={\bf Alt\Delta B_{|1|\oplus}H}$
\end{theorem}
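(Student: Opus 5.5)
The plan is to reduce the statement to the theorem just proved for the $\Sigma$ versions, ${\bf Alt\Sigma B_{|0|\oplus}H}={\bf Alt\Sigma B_{|1|\oplus}H}$, by establishing a shifted containment between levels. Write ${\bf C_0}={\bf B_{|0|\oplus}P}$ and ${\bf C_1}={\bf B_{|1|\oplus}P}$. Throughout, we use only the relativized facts ${\bf C_0}^{\bf D}\subseteq{\bf P}^{{\bf C_1}^{\bf D}}$, ${\bf C_1}^{\bf D}\subseteq{\bf P}^{{\bf C_0}^{\bf D}}$, ${\bf P}^{\bf D}\subseteq{\bf C_i}^{\bf D}$, and the absorption ${\bf P}^{{\bf P}^{\bf D}}={\bf P}^{\bf D}$.

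The key step is the inclusion
$$
{\bf Alt\Delta B_{|0|\oplus}^{k}}\subseteq{\bf Alt\Delta B_{|1|\oplus}^{k+1}}\quad\text{for every } k\geq 0,
$$
together with its symmetric counterpart. For $k=0$ it is trivial, since ${\bf P}$ is contained in every level. For $k\geq 1$, the proof of the previous theorem already gives ${\bf Alt\Sigma B_{|0|\oplus}^{k}}\subseteq{\bf Alt\Sigma B_{|1|\oplus}^{k+1}}$. Relativizing a deterministic outer machine to both sides and using monotonicity of ${\bf P}^{(\cdot)}$ in its oracle class yields
$$
{\bf Alt\Delta B_{|0|\oplus}^{k}}={\bf P}^{{\bf Alt\Sigma B_{|0|\oplus}^{k}}}\subseteq{\bf P}^{{\bf Alt\Sigma B_{|1|\oplus}^{k+1}}}={\bf Alt\Delta B_{|1|\oplus}^{k+1}}.
$$
The symmetric inclusion ${\bf Alt\Delta B_{|1|\oplus}^{k}}\subseteq{\bf Alt\Delta B_{|0|\oplus}^{k+1}}$ follows identically from the second half of that proof. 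Taking unions over all $k$ then gives ${\bf Alt\Delta B_{|0|\oplus}H}\subseteq{\bf Alt\Delta B_{|1|\oplus}H}$ and the reverse inclusion, hence equality.

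The main point needing care is the monotonicity claim: if every language in a class ${\bf A}$ lies in a class ${\bf B}$, then ${\bf P}^{\bf A}\subseteq{\bf P}^{\bf B}$. This holds because a ${\bf P}^{\bf A}$ machine uses a single oracle language from ${\bf A}$, and that language is itself a member of ${\bf B}$. So the step is immediate once it is stated explicitly. No counting arguments with $B_0$ or $B_1$ are needed here; everything reduces to the already established $\Sigma$-level shifted containments.

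If a level-by-level equality were wanted instead, I would mimic the proof of ${\bf \Delta B_{|0|\oplus}H}={\bf \Delta B_{|1|\oplus}H}$ via the relativized Turing equivalence ${\bf P}^{{\bf C_0}^{\bf A}}={\bf P}^{{\bf C_1}^{\bf A}}$. That route is more delicate because it requires replacing oracles deep inside the tower, so the shifted-containment route is the one I would use.
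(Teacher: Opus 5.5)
Your argument is correct, but it is not the route the paper takes.

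The paper proves the stronger level-by-level equality ${\bf Alt\Delta B_{|0|\oplus}^{k}}={\bf Alt\Delta B_{|1|\oplus}^{k}}$ by induction on $k$. It writes ${\bf Alt\Delta B_{|0|\oplus}^{k+1}}={\bf P}^{{\bf C_0}^{{\bf Alt\Sigma B_{|1|\oplus}^{k}}}}$ and switches the outer ${\bf C_0}$ to ${\bf C_1}$ using the relativized Turing equivalence ${\bf P}^{{\bf C_0}^{\bf A}}={\bf P}^{{\bf C_1}^{\bf A}}$ from \cite{P26}. It then uses the induction hypothesis ${\bf P}^{{\bf Alt\Sigma B_{|0|\oplus}^{k}}}={\bf P}^{{\bf Alt\Sigma B_{|1|\oplus}^{k}}}$ to swap the inner oracle inside the ${\bf C_1}$ computation, which implicitly needs the absorption ${\bf C_1}^{{\bf P}^{\bf A}}={\bf C_1}^{\bf A}$.

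You instead prove only interleaving containments with an index shift, pushing the shifted $\Sigma$-containments through the monotone operator ${\bf P}^{(\cdot)}$. That gives equality of the unions, which is all the statement asks for. It is also all the later theorem equating the alternating and non-alternating $\Delta$ hierarchies uses.

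Your route is more elementary than you present it. The shifted containment holds simply because ${\bf Alt\Sigma B_{|1|\oplus}^{k+1}}={\bf C_1}^{{\bf Alt\Sigma B_{|0|\oplus}^{k}}}$ and ${\bf A}\subseteq{\bf P}^{\bf A}\subseteq{\bf C_1}^{\bf A}$. So you need neither ${\bf C_0}^{\bf D}\subseteq{\bf P}^{{\bf C_1}^{\bf D}}$ nor ${\bf C_1}^{\bf D}\subseteq{\bf P}^{{\bf C_0}^{\bf D}}$; you only need that each ${\bf C_i}$ relativizably contains ${\bf P}$.

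This shows that the union equality is a structural fact about interleaved oracle towers. It says nothing about how ${\bf B_{|0|\oplus}P}$ and ${\bf B_{|1|\oplus}P}$ relate. What the paper's approach buys is the level-by-level equality, and that is exactly where the Turing equivalence ${\bf P}^{\bf C_0}={\bf P}^{\bf C_1}$ is genuinely used.
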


\begin{proof}
Let ${\bf C_0}={\bf B_{|0|\oplus}P}$ and ${\bf C_1}={\bf B_{|1|\oplus}P}$. We know that ${\bf P}^{\bf C_0}={\bf P}^{\bf C_1}$, which relativizes for every oracle class ${\bf A}$, ${\bf P}^{{\bf C_0}^{\bf A}}={\bf P}^{{\bf C_1}^{\bf A}}$.

We prove by induction on $k$ that ${\bf Alt\Delta B_{|0|\oplus}^{k}}={\bf Alt\Delta B_{|1|\oplus}^{k}}$. $k=0$ and $k=1$ are trivial. Now assume ${\bf Alt\Delta B_{|0|\oplus}^{k}}={\bf Alt\Delta B_{|1|\oplus}^{k}}$. Equivalently, ${\bf P}^{{\bf Alt\Sigma B_{|0|\oplus}^{k}}}={\bf P}^{{\bf Alt\Sigma B_{|1|\oplus}^{k}}}$. We show that ${\bf Alt\Delta B_{|0|\oplus}^{k+1}}={\bf Alt\Delta B_{|1|\oplus}^{k+1}}$. By definition, ${\bf Alt\Delta B_{|0|\oplus}^{k+1}}={\bf P}^{{\bf Alt\Sigma B_{|0|\oplus}^{k+1}}}={\bf P}^{{\bf C_0}^{{\bf Alt\Sigma B_{|1|\oplus}^{k}}}}$. By the relativized Turing equivalence of ${\bf C_0}$ and ${\bf C_1}$, this equals ${\bf P}^{{\bf C_1}^{{\bf Alt\Sigma B_{|1|\oplus}^{k}}}}$. Since ${\bf P}^{{\bf Alt\Sigma B_{|0|\oplus}^{k}}}={\bf P}^{{\bf Alt\Sigma B_{|1|\oplus}^{k}}}$ by the induction hypothesis, the oracle ${\bf Alt\Sigma B_{|1|\oplus}^{k}}$ can be replaced by the Turing equivalent oracle ${\bf Alt\Sigma B_{|0|\oplus}^{k}}$ inside an outer deterministic polynomial time computation. Thus, ${\bf P}^{{\bf C_1}^{{\bf Alt\Sigma B_{|1|\oplus}^{k}}}}={\bf P}^{{\bf C_1}^{{\bf Alt\Sigma B_{|0|\oplus}^{k}}}}={\bf P}^{{\bf Alt\Sigma B_{|1|\oplus}^{k+1}}}={\bf Alt\Delta B_{|1|\oplus}^{k+1}}$. Therefore, ${\bf Alt\Delta B_{|0|\oplus}^{k+1}}={\bf Alt\Delta B_{|1|\oplus}^{k+1}}$. Then by induction ${\bf Alt\Delta B_{|0|\oplus}^{k}}={\bf Alt\Delta B_{|1|\oplus}^{k}}$ for every $k\geq 0$. Taking unions over all levels gives ${\bf Alt\Delta B_{|0|\oplus}H}={\bf Alt\Delta B_{|1|\oplus}H}$.$\qed$
\end{proof}

\begin{theorem}
${\bf Alt\Sigma B_{|0|\oplus}H}={\bf Alt\Sigma B_{|1|\oplus}H}={\bf \Sigma B_{|0|\oplus}H}={\bf \Sigma B_{|1|\oplus}H}$
\end{theorem}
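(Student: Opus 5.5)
Since Theorem 12 gives ${\bf Alt\Sigma B_{|0|\oplus}H}={\bf Alt\Sigma B_{|1|\oplus}H}$ and the earlier theorem gives ${\bf \Sigma B_{|0|\oplus}H}={\bf \Sigma B_{|1|\oplus}H}$, it suffices to prove ${\bf Alt\Sigma B_{|0|\oplus}H}={\bf \Sigma B_{|0|\oplus}H}$. I will show each hierarchy is contained in the other by comparing levels, allowing a shift in index. Throughout, write ${\bf C_0}={\bf B_{|0|\oplus}P}$ and ${\bf C_1}={\bf B_{|1|\oplus}P}$, and use the relativized facts already collected: ${\bf C_1}^{\bf D}\subseteq{\bf C_0}^{\bf D}$, ${\bf C_0}^{\bf D}\subseteq{\bf P}^{{\bf C_1}^{\bf D}}$, ${\bf P}^{\bf D}\subseteq{\bf C_i}^{\bf D}$, and the absorption ${\bf C_i}^{{\bf P}^{\bf D}}\subseteq{\bf C_i}^{\bf D}$.

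For ${\bf Alt\Sigma B_{|0|\oplus}H}\subseteq{\bf \Sigma B_{|0|\oplus}H}$, I will prove by induction on $k$ that every level of either alternating tower of length $k$ (outermost ${\bf C_0}$ or outermost ${\bf C_1}$) is contained in ${\bf \Sigma B_{|0|\oplus}^{k}}$. The base cases are $k=0$, where both sides are ${\bf P}$, and $k=1$, where ${\bf C_1}\subseteq{\bf C_0}$. For the inductive step, an alternating tower of length $k+1$ is ${\bf C_i}^{\bf T}$ with ${\bf T}$ an alternating tower of length $k$. By induction ${\bf T}\subseteq{\bf \Sigma B_{|0|\oplus}^{k}}$, so by monotonicity in the oracle ${\bf C_i}^{\bf T}\subseteq{\bf C_i}^{{\bf \Sigma B_{|0|\oplus}^{k}}}\subseteq{\bf C_0}^{{\bf \Sigma B_{|0|\oplus}^{k}}}$. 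This direction is straightforward.

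For the reverse inclusion, I will show ${\bf \Sigma B_{|0|\oplus}^{k}}\subseteq{\bf P}^{{\bf Alt\Sigma B_{|0|\oplus}^{k}}}$, and simultaneously ${\bf \Sigma B_{|0|\oplus}^{k}}\subseteq{\bf P}^{{\bf Alt\Sigma B_{|1|\oplus}^{k}}}$, by joint induction. These two statements are needed together because the inner oracle of ${\bf Alt\Sigma B_{|0|\oplus}^{k+1}}$ is ${\bf Alt\Sigma B_{|1|\oplus}^{k}}$. For $k=1$, ${\bf C_0}\subseteq{\bf P}^{\bf C_0}$ and ${\bf C_0}\subseteq{\bf P}^{\bf C_1}$. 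For the step, start from ${\bf \Sigma B_{|0|\oplus}^{k+1}}={\bf C_0}^{{\bf \Sigma B_{|0|\oplus}^{k}}}$. Apply the induction hypothesis, then absorption, to get containment in ${\bf C_0}^{{\bf Alt\Sigma B_{|1|\oplus}^{k}}}={\bf Alt\Sigma B_{|0|\oplus}^{k+1}}$. Using the other half of the hypothesis, the same argument gives containment in ${\bf C_0}^{{\bf Alt\Sigma B_{|0|\oplus}^{k}}}\subseteq{\bf P}^{{\bf C_1}^{{\bf Alt\Sigma B_{|0|\oplus}^{k}}}}={\bf P}^{{\bf Alt\Sigma B_{|1|\oplus}^{k+1}}}$. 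Finally, ${\bf P}^{\bf D}\subseteq{\bf C_0}^{\bf D}$ yields ${\bf \Sigma B_{|0|\oplus}^{k}}\subseteq{\bf Alt\Sigma B_{|0|\oplus}^{k+1}}$, and taking unions over all levels finishes the proof.

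The main obstacle is the absorption step ${\bf C_i}^{{\bf P}^{\bf D}}\subseteq{\bf C_i}^{\bf D}$. The paper uses it freely, but it requires that a ${\bf \#P}^{\bf D}$ machine can simulate a ${\bf P}^{\bf D}$ oracle along each path without changing the count. This holds because each nondeterministic path can run the deterministic ${\bf P}^{\bf D}$ computation itself and make its queries to ${\bf D}$ directly. I would state this explicitly, as the paper does in the ${\bf \Sigma B}$ equality proof, rather than re-derive it.
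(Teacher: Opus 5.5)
Your proof is correct, but it is organized differently from the paper's. The paper proves the level-wise equality ${\bf Alt\Sigma B_{|0|\oplus}^{k}}={\bf \Sigma B_{|0|\oplus}^{k}}$ in a single induction. It argues that ${\bf Alt\Sigma B_{|1|\oplus}^{k}}$ and ${\bf \Sigma B_{|0|\oplus}^{k}}$ are Turing equivalent, then swaps one for the other as the oracle of the outer ${\bf C_0}$ machine, using absorption in both directions.

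You split the equality into two inclusions. The inclusion ${\bf Alt\Sigma}\subseteq{\bf \Sigma}$ follows from oracle monotonicity and the relativized ${\bf C_1}\subseteq{\bf C_0}$ alone, with no Turing equivalence needed. The reverse inclusion comes from the joint induction on ${\bf \Sigma B_{|0|\oplus}^{k}}\subseteq{\bf P}^{{\bf Alt\Sigma B_{|0|\oplus}^{k}}}$ and ${\bf \Sigma B_{|0|\oplus}^{k}}\subseteq{\bf P}^{{\bf Alt\Sigma B_{|1|\oplus}^{k}}}$.

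What your bookkeeping buys is correct indexing. The paper obtains ${\bf P}^{{\bf Alt\Sigma B_{|0|\oplus}^{k}}}={\bf P}^{{\bf Alt\Sigma B_{|1|\oplus}^{k}}}$ by applying the relativized equivalence with ${\bf A}={\bf Alt\Sigma B_{|0|\oplus}^{k-1}}$. But ${\bf C_0}^{{\bf Alt\Sigma B_{|0|\oplus}^{k-1}}}$ is not ${\bf Alt\Sigma B_{|0|\oplus}^{k}}$; by definition the latter is ${\bf C_0}^{{\bf Alt\Sigma B_{|1|\oplus}^{k-1}}}$. So that step really has to cite the level-wise ${\bf Alt\Delta}$ equality proved in the preceding theorem. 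Your second induction hypothesis supplies exactly the needed reduction directly.

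One small remark: your final shift through ${\bf P}^{\bf D}\subseteq{\bf C_0}^{\bf D}$ is unnecessary. The first half of your inductive step already gives ${\bf \Sigma B_{|0|\oplus}^{k+1}}\subseteq{\bf C_0}^{{\bf Alt\Sigma B_{|1|\oplus}^{k}}}={\bf Alt\Sigma B_{|0|\oplus}^{k+1}}$. Together with your first direction, this yields the level-wise equality, which is the form the paper's subsequent ${\bf Alt\Delta}$-versus-${\bf \Delta}$ theorem actually uses. So you should state it rather than only the union equality with an index shift.
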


\begin{proof}
Let ${\bf C_0}={\bf B_{|0|\oplus}P}$ and ${\bf C_1}={\bf B_{|1|\oplus}P}$. We already know that ${\bf \Sigma B_{|0|\oplus}H}={\bf \Sigma B_{|1|\oplus}H}$ and ${\bf Alt\Sigma B_{|0|\oplus}H}={\bf Alt\Sigma B_{|1|\oplus}H}$. Therefore, it is enough to prove ${\bf Alt\Sigma B_{|0|\oplus}H}={\bf \Sigma B_{|0|\oplus}H}$.

We prove by induction on $k$ that ${\bf Alt\Sigma B_{|0|\oplus}^{k}}={\bf \Sigma B_{|0|\oplus}^{k}}$. $k=0$ and $k=1$ are trivial. Now assume ${\bf Alt\Sigma B_{|0|\oplus}^{k}}={\bf \Sigma B_{|0|\oplus}^{k}}$. We prove that ${\bf Alt\Sigma B_{|0|\oplus}^{k+1}}={\bf \Sigma B_{|0|\oplus}^{k+1}}$. By definition, ${\bf Alt\Sigma B_{|0|\oplus}^{k+1}}={\bf C_0}^{{\bf Alt\Sigma B_{|1|\oplus}^{k}}}$ and ${\bf \Sigma B_{|0|\oplus}^{k+1}}={\bf C_0}^{{\bf \Sigma B_{|0|\oplus}^{k}}}$. Therefore, it is enough to show that the oracle ${\bf Alt\Sigma B_{|1|\oplus}^{k}}$ can be replaced by the oracle ${\bf \Sigma B_{|0|\oplus}^{k}}$ inside the outer ${\bf C_0}$ computation. We already know that ${\bf P}^{\bf C_0}={\bf P}^{\bf C_1}$. This equivalence relativizes. So for every oracle class ${\bf A}$, we have ${\bf P}^{{\bf C_0}^{\bf A}}={\bf P}^{{\bf C_1}^{\bf A}}$. Applying this with ${\bf A}={\bf Alt\Sigma B_{|0|\oplus}^{k-1}}$, we get ${\bf P}^{{\bf C_0}^{{\bf Alt\Sigma B_{|0|\oplus}^{k-1}}}}={\bf P}^{{\bf C_1}^{{\bf Alt\Sigma B_{|0|\oplus}^{k-1}}}}$. By the definitions of the alternating levels, this says ${\bf P}^{{\bf Alt\Sigma B_{|0|\oplus}^{k}}}={\bf P}^{{\bf Alt\Sigma B_{|1|\oplus}^{k}}}$. By the induction hypothesis, ${\bf Alt\Sigma B_{|0|\oplus}^{k}}={\bf \Sigma B_{|0|\oplus}^{k}}$. Therefore, ${\bf P}^{{\bf Alt\Sigma B_{|1|\oplus}^{k}}}={\bf P}^{{\bf \Sigma B_{|0|\oplus}^{k}}}$. So the oracles ${\bf Alt\Sigma B_{|1|\oplus}^{k}}$ and ${\bf \Sigma B_{|0|\oplus}^{k}}$ are Turing equivalent. Therefore, an oracle for ${\bf Alt\Sigma B_{|1|\oplus}^{k}}$ can be simulated inside an outer ${\bf C_0}$ computation by using an oracle for ${\bf \Sigma B_{|0|\oplus}^{k}}$. Thus, ${\bf C_0}^{{\bf Alt\Sigma B_{|1|\oplus}^{k}}}={\bf C_0}^{{\bf \Sigma B_{|0|\oplus}^{k}}}$ and vice versa. As a result, ${\bf Alt\Sigma B_{|0|\oplus}^{k+1}}={\bf \Sigma B_{|0|\oplus}^{k+1}}$. Then by induction, ${\bf Alt\Sigma B_{|0|\oplus}^{k}}={\bf \Sigma B_{|0|\oplus}^{k}}$ for every $k\geq 0$. Taking unions over all levels gives ${\bf Alt\Sigma B_{|0|\oplus}H}={\bf \Sigma B_{|0|\oplus}H}$. Since ${\bf Alt\Sigma B_{|0|\oplus}H}={\bf Alt\Sigma B_{|1|\oplus}H}$ and ${\bf \Sigma B_{|0|\oplus}H}={\bf \Sigma B_{|1|\oplus}H}$ were already established, then we can conclude that ${\bf Alt\Sigma B_{|0|\oplus}H}={\bf Alt\Sigma B_{|1|\oplus}H}={\bf \Sigma B_{|0|\oplus}H}={\bf \Sigma B_{|1|\oplus}H}$.$\qed$
\end{proof}

\begin{theorem}
${\bf Alt\Delta B_{|0|\oplus}H}={\bf Alt\Delta B_{|1|\oplus}H}={\bf \Delta B_{|0|\oplus}H}={\bf \Delta B_{|1|\oplus}H}$
\end{theorem}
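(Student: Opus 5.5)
We already know from the two preceding equality theorems that ${\bf \Delta B_{|0|\oplus}H}={\bf \Delta B_{|1|\oplus}H}$ and ${\bf Alt\Delta B_{|0|\oplus}H}={\bf Alt\Delta B_{|1|\oplus}H}$. It therefore suffices to prove the single bridging equality ${\bf Alt\Delta B_{|0|\oplus}H}={\bf \Delta B_{|0|\oplus}H}$, and we plan to do this level by level.

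The cleanest route is to use the level-wise identity ${\bf Alt\Sigma B_{|0|\oplus}^{k}}={\bf \Sigma B_{|0|\oplus}^{k}}$ for every $k\geq 0$. This identity was established by induction inside the proof of the immediately preceding theorem. Since the $\Delta$ levels are defined by ${\bf Alt\Delta B_{|0|\oplus}^{k}}={\bf P}^{{\bf Alt\Sigma B_{|0|\oplus}^{k}}}$ and ${\bf \Delta B_{|0|\oplus}^{k}}={\bf P}^{{\bf \Sigma B_{|0|\oplus}^{k}}}$, equal oracle classes give equal relativized classes. Hence ${\bf Alt\Delta B_{|0|\oplus}^{k}}={\bf \Delta B_{|0|\oplus}^{k}}$ for every $k\geq 1$, and the case $k=0$ holds because both levels equal ${\bf P}$ by definition. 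Taking unions over $k$ then gives ${\bf Alt\Delta B_{|0|\oplus}H}={\bf \Delta B_{|0|\oplus}H}$, and chaining with the two earlier equalities yields the full statement.

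The only delicate point is that the level-wise identity appears inside a proof rather than as a stated result. To avoid depending on it, we would reprove the needed Turing equivalence directly by induction on $k$, namely ${\bf P}^{{\bf Alt\Sigma B_{|0|\oplus}^{k}}}={\bf P}^{{\bf \Sigma B_{|0|\oplus}^{k}}}$, which is exactly the $\Delta$-level statement.

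For the inductive step, we write ${\bf Alt\Sigma B_{|0|\oplus}^{k+1}}={\bf C_0}^{{\bf Alt\Sigma B_{|1|\oplus}^{k}}}$ and proceed in three moves:
\begin{enumerate}
\item We use the relativized equality ${\bf P}^{{\bf C_0}^{\bf A}}={\bf P}^{{\bf C_1}^{\bf A}}$ with ${\bf A}={\bf Alt\Sigma B_{|0|\oplus}^{k-1}}$. This shows that ${\bf Alt\Sigma B_{|1|\oplus}^{k}}$ and ${\bf Alt\Sigma B_{|0|\oplus}^{k}}$ are Turing equivalent.
\item By the induction hypothesis, ${\bf Alt\Sigma B_{|0|\oplus}^{k}}$ is in turn Turing equivalent to ${\bf \Sigma B_{|0|\oplus}^{k}}$.
\item We replace Turing-equivalent oracles under the outer ${\bf C_0}$ machine, and then put the outer ${\bf P}$ on top.
\end{enumerate}

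We expect the main obstacle to be the third move: justifying that an oracle can be swapped for a Turing-equivalent one beneath a ${\bf C_0}$ computation. This requires ${\bf C_0}^{{\bf P}^{\bf D}}={\bf C_0}^{\bf D}$, an absorption property that the paper assumes as standard. Because we apply an outer ${\bf P}$ afterwards, a weaker fallback also suffices. We would use the relativized containment ${\bf C_0}^{\bf D}\subseteq{\bf P}^{{\bf C_1}^{\bf D}}$ together with its symmetric counterpart, so that the swap only ever happens inside ${\bf P}$, where absorption of deterministic oracle computations is unproblematic.
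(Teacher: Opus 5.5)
Your main route is the paper's proof: reduce to ${\bf Alt\Delta B_{|0|\oplus}H}={\bf \Delta B_{|0|\oplus}H}$, apply ${\bf P}^{(\cdot)}$ to the level-wise identity ${\bf Alt\Sigma B_{|0|\oplus}^{k}}={\bf \Sigma B_{|0|\oplus}^{k}}$ from the preceding proof, take unions, and chain with the two earlier equalities; it is correct as it stands. Your optional fallback is unnecessary and would not work as written. In move 1, ${\bf C_0}^{{\bf Alt\Sigma B_{|0|\oplus}^{k-1}}}$ is not ${\bf Alt\Sigma B_{|0|\oplus}^{k}}$, which is ${\bf C_0}^{{\bf Alt\Sigma B_{|1|\oplus}^{k-1}}}$, so that choice of ${\bf A}$ does not make the two alternating $k$th levels Turing equivalent; take that equivalence instead from the level-wise equality ${\bf Alt\Delta B_{|0|\oplus}^{k}}={\bf Alt\Delta B_{|1|\oplus}^{k}}$ proved in the alternating-$\Delta$ theorem. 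Also, adding an outer ${\bf P}$ does not avoid swapping the oracle beneath a counting machine, but that absorption is harmless anyway, because each nondeterministic path can simulate the ${\bf P}^{\bf D}$ computation.
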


\begin{proof}
We already know that ${\bf Alt\Delta B_{|0|\oplus}H}={\bf Alt\Delta B_{|1|\oplus}H}$ and ${\bf \Delta B_{|0|\oplus}H}={\bf \Delta B_{|1|\oplus}H}$. Therefore, it is enough to prove ${\bf Alt\Delta B_{|0|\oplus}H}={\bf \Delta B_{|0|\oplus}H}$.

We prove the level by level statement ${\bf Alt\Delta B_{|0|\oplus}^{k}}={\bf \Delta B_{|0|\oplus}^{k}}$, for every $k\geq 0$. $k=0$ is trivial. Then by definition, ${\bf Alt\Delta B_{|0|\oplus}^{k}}={\bf P}^{{\bf Alt\Sigma B_{|0|\oplus}^{k}}}$ and ${\bf \Delta B_{|0|\oplus}^{k}}={\bf P}^{{\bf \Sigma B_{|0|\oplus}^{k}}}$ for $k\geq 1$. From the preceding alternating-$\Sigma$ theorem, we have the level by level equality ${\bf Alt\Sigma B_{|0|\oplus}^{k}}={\bf \Sigma B_{|0|\oplus}^{k}}$ for every $k\geq 0$. So then we have, ${\bf Alt\Delta B_{|0|\oplus}^{k}} ={\bf P}^{{\bf Alt\Sigma B_{|0|\oplus}^{k}}} ={\bf P}^{{\bf \Sigma B_{|0|\oplus}^{k}}}={\bf \Delta B_{|0|\oplus}^{k}}$. Therefore, ${\bf Alt\Delta B_{|0|\oplus}^{k}}={\bf \Delta B_{|0|\oplus}^{k}}$ for every $k\geq 0$. Taking unions over all levels gives ${\bf Alt\Delta B_{|0|\oplus}H}={\bf \Delta B_{|0|\oplus}H}$. Since ${\bf Alt\Delta B_{|0|\oplus}H}={\bf Alt\Delta B_{|1|\oplus}H}$ and ${\bf \Delta B_{|0|\oplus}H}={\bf \Delta B_{|1|\oplus}H}$ were already established, then we can conclude that ${\bf Alt\Delta B_{|0|\oplus}H}={\bf Alt\Delta B_{|1|\oplus}H}={\bf \Delta B_{|0|\oplus}H}={\bf \Delta B_{|1|\oplus}H}$.$\qed$
\end{proof}

\subsection{Parity based bit-counting hierarchies and the counting hierarchy}

We prove that the parity based bit-counting hierarchies are contained in the counting hierarchy.

\begin{theorem}
${\bf \Sigma B_{|0|\oplus}H}={\bf \Sigma B_{|1|\oplus}H}\subseteq{\bf CH}$
\end{theorem}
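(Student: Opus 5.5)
The equality ${\bf \Sigma B_{|0|\oplus}H}={\bf \Sigma B_{|1|\oplus}H}$ was established in the earlier theorem on the two $\Sigma$ hierarchies. So it suffices to show ${\bf \Sigma B_{|0|\oplus}H}\subseteq{\bf CH}$. The plan is to prove by induction on $k$ that ${\bf \Sigma B_{|0|\oplus}^{k}}\subseteq{\bf C_{2k}P}$, or more loosely that each level lies in some finite level of ${\bf CH}$. Since each language of the hierarchy sits at a finite level $k$, taking unions then gives the containment.

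The base cases are immediate: ${\bf \Sigma B_{|0|\oplus}^{0}}={\bf P}={\bf C_0P}$. For $k=1$ we use the containment ${\bf B_{|0|\oplus}P}\subseteq{\bf P}^{\bf PP}$ from \cite{P26}. Since ${\bf P}^{\bf PP}\subseteq{\bf PP}^{\bf PP}={\bf C_2P}$ (a ${\bf PP}$ machine can run a deterministic oracle computation and then accept on exactly half its branches according to the result), we get ${\bf B_{|0|\oplus}P}\subseteq{\bf C_2P}$.

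For the inductive step we relativize this containment: for every oracle $A$, ${\bf B_{|0|\oplus}P}^{A}\subseteq{\bf P}^{{\bf PP}^{A}}$. The argument of \cite{P26} only manipulates a ${\bf \#P}$ function by polynomially many threshold or bit-extraction queries, and ${\bf \#P}^A$ functions behave identically. If ${\bf \Sigma B_{|0|\oplus}^{k}}\subseteq{\bf C_{2k}P}$, then
${\bf \Sigma B_{|0|\oplus}^{k+1}}={\bf B_{|0|\oplus}P}^{{\bf \Sigma B_{|0|\oplus}^{k}}}\subseteq{\bf B_{|0|\oplus}P}^{{\bf C_{2k}P}}\subseteq{\bf P}^{{\bf PP}^{{\bf C_{2k}P}}}={\bf P}^{{\bf C_{2k+1}P}}\subseteq{\bf C_{2k+2}P}$.
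The first containment replaces an oracle language by one in a class containing it. That is monotone because each oracle language in ${\bf \Sigma B_{|0|\oplus}^{k}}$ is itself a single language in ${\bf C_{2k}P}$. The last step uses ${\bf P}^{\bf D}\subseteq{\bf PP}^{\bf D}$ for any oracle class ${\bf D}$.

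The main obstacle is justifying the relativized form ${\bf B_{|0|\oplus}P}^{A}\subseteq{\bf P}^{{\bf PP}^{A}}$. If one does not want to lean on \cite{P26} relativizing, I would give the direct argument. Given a ${\bf \#P}^A$ function $F$ with $F(x)<2^{p}$, a ${\bf P}^{{\bf PP}^A}$ machine can recover $F(x)$ exactly by binary search with threshold queries ``$F(x)\ge m$?''. Each such query is in ${\bf PP}^A$ by the standard padding that shifts the threshold to $2^{p'-1}$. After $p+1$ queries the machine knows $F(x)$, computes $B_0(F(x))$, and checks its parity. This uses only the standard fact ${\bf P}^{{\bf \#P}^A}={\bf P}^{{\bf PP}^A}$, so the induction goes through and ${\bf \Sigma B_{|0|\oplus}H}\subseteq\bigcup_k{\bf C_{2k}P}={\bf CH}$.
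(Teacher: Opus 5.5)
Your proof is correct and follows the paper's strategy: relativize ${\bf B_{|0|\oplus}P}\subseteq{\bf P}^{\bf PP}$, climb the hierarchy one level at a time, and close each step with ${\bf P}^{\bf D}\subseteq{\bf PP}^{\bf D}$. (In your parenthetical justification of that last fact, the simulating machine should accept on all branches rather than exactly half when the deterministic computation accepts, because the paper's ${\bf PP}$ threshold is strict.)

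The only difference is bookkeeping. The paper carries the tighter invariant ${\bf \Sigma B_{|0|\oplus}^{k}}\subseteq{\bf P}^{\bf C_kP}$ and absorbs the inner ${\bf P}$ into the ${\bf PP}$ oracle via ${\bf PP}^{{\bf P}^{\bf C_kP}}\subseteq{\bf PP}^{\bf C_kP}$, which places level $k$ in ${\bf C_{k+1}P}$. Your invariant ${\bf C_{2k}P}$ avoids that absorption step at the cost of two counting levels per step. Your binary-search argument also spells out the relativization that the paper only asserts.
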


\begin{proof}
Let ${\bf C_0}={\bf B_{|0|\oplus}P}$. We use the known containment ${\bf P}^{\bf C_0}\subseteq{\bf P}^{\bf PP}$. This containment relativizes, so for every oracle class ${\bf A}$, we have ${\bf P}^{{\bf C_0}^{\bf A}}\subseteq{\bf P}^{{\bf PP}^{\bf A}}$.

We prove by induction on $k$ that ${\bf \Sigma B_{|0|\oplus}^{k}}\subseteq{\bf P}^{\bf C_kP}$. $k=0$ is trivial. Now assume ${\bf \Sigma B_{|0|\oplus}^{k}}\subseteq{\bf P}^{\bf C_kP}$. We prove ${\bf \Sigma B_{|0|\oplus}^{k+1}}\subseteq{\bf P}^{\bf C_{k+1}P}$. By definition,
${\bf \Sigma B_{|0|\oplus}^{k+1}}={\bf C_0}^{{\bf \Sigma B_{|0|\oplus}^{k}}}$. Since ${\bf C_0}^{{\bf \Sigma B_{|0|\oplus}^{k}}}\subseteq{\bf P}^{{\bf C_0}^{{\bf \Sigma B_{|0|\oplus}^{k}}}}$, and since the known containment relativizes, we get ${\bf \Sigma B_{|0|\oplus}^{k+1}}\subseteq{\bf P}^{{\bf PP}^{{\bf \Sigma B_{|0|\oplus}^{k}}}}$. By the induction hypothesis, an oracle for ${\bf \Sigma B_{|0|\oplus}^{k}}$ can be simulated by a deterministic polynomial time machine with oracle ${\bf C_kP}$. Therefore, ${\bf PP}^{{\bf \Sigma B_{|0|\oplus}^{k}}}\subseteq{\bf PP}^{{\bf P}^{\bf C_kP}}$. Since deterministic polynomial time oracle computations can be absorbed into the oracle access, we have ${\bf PP}^{{\bf P}^{\bf C_kP}}\subseteq{\bf PP}^{\bf C_kP}={\bf C_{k+1}P}$. Thus ${\bf PP}^{{\bf \Sigma B_{|0|\oplus}^{k}}}\subseteq{\bf C_{k+1}P}$. And we get ${\bf \Sigma B_{|0|\oplus}^{k+1}}\subseteq{\bf P}^{\bf C_{k+1}P}$. Finally, for every oracle class ${\bf A}$, we have ${\bf P}^{\bf A}\subseteq{\bf PP}^{\bf A}$. Therefore, ${\bf P}^{\bf C_kP}\subseteq{\bf PP}^{\bf C_kP}={\bf C_{k+1}P}$. Since ${\bf \Sigma B_{|0|\oplus}^{k}}\subseteq{\bf P}^{\bf C_kP}$, we get ${\bf \Sigma B_{|0|\oplus}^{k}}\subseteq{\bf C_{k+1}P}$ for every $k\geq 0$. Taking unions over all $k$ gives ${\bf \Sigma B_{|0|\oplus}H}\subseteq{\bf CH}$. Since ${\bf \Sigma B_{|0|\oplus}H}={\bf \Sigma B_{|1|\oplus}H}$, then we can conclude that ${\bf \Sigma B_{|0|\oplus}H}={\bf \Sigma B_{|1|\oplus}H}\subseteq{\bf CH}$.$\qed$
\end{proof}

\begin{theorem}
${\bf \Delta B_{|0|\oplus}H}={\bf \Delta B_{|1|\oplus}H}\subseteq{\bf CH}$
\end{theorem}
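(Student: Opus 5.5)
The equality ${\bf \Delta B_{|0|\oplus}H}={\bf \Delta B_{|1|\oplus}H}$ was already established earlier in this section, so only ${\bf \Delta B_{|0|\oplus}H}\subseteq{\bf CH}$ remains. I will reduce it to the level-by-level bound obtained in the proof of the preceding theorem, ${\bf \Sigma B_{|0|\oplus}^{k}}\subseteq{\bf P}^{\bf C_kP}$ for every $k\geq 0$, together with the fact that ${\bf CH}$ absorbs an outer deterministic polynomial time machine at the cost of one level.

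Step one: fix $k\geq 1$ and recall ${\bf \Delta B_{|0|\oplus}^{k}}={\bf P}^{{\bf \Sigma B_{|0|\oplus}^{k}}}$. By the inductive statement from the previous proof, every oracle query to a language in ${\bf \Sigma B_{|0|\oplus}^{k}}$ can be answered by a deterministic polynomial time machine with oracle ${\bf C_kP}$. Composing the outer polynomial time machine with these polynomially many polynomial time subroutines gives another polynomial time machine, because queries have polynomial length and polynomials compose. This uses the same standard absorption property the paper invokes throughout, and it yields ${\bf \Delta B_{|0|\oplus}^{k}}\subseteq{\bf P}^{{\bf P}^{\bf C_kP}}={\bf P}^{\bf C_kP}$. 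For $k=0$ the bound is trivial, since ${\bf \Delta B_{|0|\oplus}^{0}}={\bf P}$.

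Step two: use ${\bf P}^{\bf A}\subseteq{\bf PP}^{\bf A}$ for every oracle class ${\bf A}$, exactly as at the end of the previous proof. This gives ${\bf P}^{\bf C_kP}\subseteq{\bf PP}^{\bf C_kP}={\bf C_{k+1}P}$, hence ${\bf \Delta B_{|0|\oplus}^{k}}\subseteq{\bf C_{k+1}P}\subseteq{\bf CH}$. Taking the union over all $k$ gives ${\bf \Delta B_{|0|\oplus}H}\subseteq{\bf CH}$. Combining this with the earlier equality yields the full statement.

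The only point needing care is the absorption ${\bf P}^{{\bf P}^{\bf C_kP}}={\bf P}^{\bf C_kP}$. Here I would spell out that the simulating machine runs each subroutine inline on the query string, so its total running time is the outer bound times a polynomial in the query length. No genuinely new obstacle arises, because the hard work, the relativized inclusion ${\bf P}^{{\bf B_{|0|\oplus}P}}\subseteq{\bf P}^{\bf PP}$, was already done in the previous theorem.
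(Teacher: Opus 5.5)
Your proposal is correct and matches the paper's proof essentially step for step. The paper likewise takes the level bound ${\bf \Sigma B_{|0|\oplus}^{k}}\subseteq{\bf P}^{\bf C_kP}$ from the preceding theorem, collapses ${\bf P}^{{\bf P}^{\bf C_kP}}={\bf P}^{\bf C_kP}$, applies ${\bf P}^{\bf A}\subseteq{\bf PP}^{\bf A}$ to land in ${\bf C_{k+1}P}$, takes the union over $k$, and invokes the earlier equality of the two $\Delta$ hierarchies (your separate handling of $k=0$ is slightly more careful than the paper, which applies ${\bf \Delta B_{|0|\oplus}^{k}}={\bf P}^{{\bf \Sigma B_{|0|\oplus}^{k}}}$ at $k=0$ too).
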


\begin{proof}
We use the previously established containment that, for every $k\geq 0$,
${\bf \Sigma B_{|0|\oplus}^{k}}\subseteq{\bf P}^{\bf C_kP}$,
where ${\bf C_kP}$ denotes the $k$th level of the counting hierarchy.

Let $k\geq 0$. By definition, ${\bf \Delta B_{|0|\oplus}^{k}}={\bf P}^{{\bf \Sigma B_{|0|\oplus}^{k}}}$. Since ${\bf \Sigma B_{|0|\oplus}^{k}}\subseteq{\bf P}^{\bf C_kP}$, a deterministic polynomial time machine with oracle access to ${\bf \Sigma B_{|0|\oplus}^{k}}$ can simulate each such oracle query by a deterministic polynomial time computation with oracle access to ${\bf C_kP}$. Therefore, ${\bf P}^{{\bf \Sigma B_{|0|\oplus}^{k}}}\subseteq{\bf P}^{{\bf P}^{\bf C_kP}}$. Nested deterministic polynomial time oracle computations collapse, so ${\bf P}^{{\bf P}^{\bf C_kP}}={\bf P}^{\bf C_kP}$. Thus ${\bf \Delta B_{|0|\oplus}^{k}}\subseteq{\bf P}^{\bf C_kP}$. Finally, for every oracle class ${\bf A}$, we have
${\bf P}^{\bf A}\subseteq{\bf PP}^{\bf A}$.
Applying this with ${\bf A}={\bf C_kP}$ gives
${\bf P}^{\bf C_kP}\subseteq{\bf PP}^{\bf C_kP}={\bf C_{k+1}P}$. Therefore,
${\bf \Delta B_{|0|\oplus}^{k}}\subseteq{\bf C_{k+1}P}$ for every $k\geq 0$. Taking unions over all $k$ gives ${\bf \Delta B_{|0|\oplus}H}\subseteq{\bf CH}$. Since ${\bf \Delta B_{|0|\oplus}H}={\bf \Delta B_{|1|\oplus}H}$ was already established, then we can conclude that ${\bf \Delta B_{|0|\oplus}H}={\bf \Delta B_{|1|\oplus}H}\subseteq{\bf CH}$.$\qed$
\end{proof}

\section{Conclusion}

We studied some properties of the parity based bit-counting complexity classes. We proved that both of our parity based bit-counting complexity classes ${\bf B_{|0|\oplus}P}$ and ${\bf B_{|1|\oplus}P}$ are closed under complement. We then improved the previously known containment from ${\bf B_{|1|\oplus}P}\subseteq {\bf P}^{{\bf B_{|0|\oplus}P}}$ to ${\bf B_{|1|\oplus}P}\subseteq {\bf B_{|0|\oplus}P}$. However, we were not able to prove ${\bf B_{|0|\oplus}P}\subseteq {\bf B_{|1|\oplus}P}$ and still the best result is ${\bf B_{|0|\oplus}P}\subseteq {\bf P}^{{\bf B_{|1|\oplus}P}}$. One route to overcome this might be to show that ${\bf B_{|1|\oplus}P}$ is closed under truth-table reductions (post reductions), but that seems challenging as well. Although we were not able to prove equivalence between our parity based bit-counting complexity classes, they are Turing equivalent,  ${\bf P}^{{\bf B_{|0|\oplus}P}}={\bf P}^{{\bf B_{|1|\oplus}P}}$, as was previously proven in \cite{P26}.  We then showed that ${\bf US}\subseteq {\bf P}^{{\bf B_{|1|\oplus}P}}$ and ${\bf US}\subseteq {\bf P}^{{\bf B_{|0|\oplus}P}}$. However, we were still not able to prove anything with regards to complexity class ${\bf C_=P}={\bf ES}={\bf MNS}$.

After that, we defined the sequences that are the class defining characteristic functions of ${\bf B_{|1|\oplus}P}$ and ${\bf B_{|0|\oplus}P}$. It turned out to be the case that the class defining characteristic function of ${\bf B_{|1|\oplus}P}$ outputs the well known Prouhet-Thue-Morse sequence. We proved that any block of four consecutive values, $N$, $N+1$, $N+2$, $N+3$, in both of these sequences can be used to establish if $N$ is even or odd. We used these theorems and proved that ${\bf \oplus P}\subseteq {\bf P}^{{\bf B_{|1|\oplus}P}}$ and ${\bf \oplus P}\subseteq {\bf P}^{{\bf B_{|0|\oplus}P}}$. As far as we know, this is the first time a complexity class, other than ${\bf P}^{\bf PP}$, were proven to contain complexity classes ${\bf NP}$, ${\bf CoNP}$,  and that, when provided as oracles to deterministic polynomial time machines, allow every language in ${\bf US}$ and ${\bf \oplus P}$ to be decided. Thus, these parity based bit-counting complexity classes possess some unique properties. 

Finally, we defined four parity based bit-counting hierarchies that were inspired by the polynomial hierarchy, namely ${\bf \Sigma B_{|0|\oplus}H}$, ${\bf \Sigma B_{|1|\oplus}H}$, ${\bf \Delta B_{|0|\oplus}H}$, and ${\bf \Delta B_{|1|\oplus}H}$. We proved that ${\bf \Sigma B_{|0|\oplus}H}={\bf \Sigma B_{|1|\oplus}H}$ and ${\bf \Delta B_{|0|\oplus}H}={\bf \Delta B_{|1|\oplus}H}$ by taking their unions over all levels. We also showed that all of them contain the polynomial hierarchy, this is not surprising since ${\bf PH}$ is made up of ${\bf NP}$ and ${\bf CoNP}$ and it was proven in \cite{P26} that both of them are contained in ${\bf B_{|0|\oplus}P}$  and ${\bf B_{|1|\oplus}P}$. We then defined four alternating parity based bit-counting hierarchies that were like the previous ones, but they alternated between ${\bf B_{|0|\oplus}P}$ and ${\bf B_{|1|\oplus}P}$ at each level. We showed that these hierarchies were equivalent to their non-alternating counterparts by taking their unions over all levels. At last, we showed that these hierarchies were contained in the counting hierarchy. In essence, we showed the existence of well defined hierarchies between the polynomial hierarchy and the counting hierarchy.

Future work could be done on the semantic variants of the bit-counting complexity classes as it was previously noted in \cite{P26}. Another approach would be to define logspace variants of the bit-counting complexity classes. For instance, the log space variants of ${\bf B_{|0|\oplus}P}$ and ${\bf B_{|1|\oplus}P}$ would be ${\bf B_{|0|\oplus}L}$ and ${\bf B_{|1|\oplus}L}$, and similarly for the other bit-counting complexity classes. It would be interesting to study how the logspace bound shapes these bit-counting complexity classes in the complexity world within ${\bf P}$. Yet another approach would be to study the bit-counting variants of the ${\bf Mod_k P}$ complexity classes defined in \cite{CH89}.

\bibliographystyle{alpha}
\bibliography{Additional_Properties_Of_Parity_Based_Bit_Counting_Complexity_Classes3}

\begin{thebibliography}{BHR00}

\bibitem[BG82]{BG82}
A.~Blass and Y.~Gurevich.
\newblock On the unique satisfiability problem.
\newblock {\em Information and control}, 55:80--88, 1982.

\bibitem[BHR00]{BHR00}
B.~Borchert, L.~Hemaspaandra, and J.~Rothe.
\newblock Restrictive acceptance suffices for equivalence problems.
\newblock {\em LMS J Comput. Math}, 3:86--95, 2000.

\bibitem[PZ83]{PZ83}
C.H. Papadimitriou and S.~Zachos.
\newblock Two remarks on the power of counting.
\newblock {\em Theoretical Computer Science}, pages 269--275, 1983.

\bibitem[Sim75]{S75}
J.~Simon.
\newblock {\em On Some Central Problems of Computational Complexity.}
\newblock PhD thesis, Cornell University Ithaca, 1975.

\bibitem[Val79]{V79}
L.G. Valiant.
\newblock The complexity of computing the permanent.
\newblock {\em Theoretical Computer Science}, 8:181--201, 1979.

\bibitem[PC18]{CP18}
Tayfun Pay and James~L Cox.
\newblock An overview of some semantic and syntactic complexity classes.
\newblock {\em Electron. Colloquium Comput. Complex.}, volume~25, page 166, 2018.

\bibitem[HO02]{HO02}
Lane~A. Hemaspaandra and Mitsunori Ogihara.
\newblock {\em The Complexity Theory Companion}.
\newblock Springer, 2002.

\bibitem[Pap94]{P94}
Christos~H. Papadimitriou.
\newblock {\em Computational Complexity}.
\newblock Addison-Wesley Longman, California, 1994.

\bibitem[S76]{S76}
L. J. Stockmeyer.
\newblock The polynomial hierarchy.
\newblock {\em Theoretical Computer Science}, volume 3, 1-22, 1976. 

\bibitem[W86]{W86}
K.W. Wagner.
\newblock The complexity of combinatorial problems with succinct input representation. 
\newblock Acta Informatica 23, 325–356, 1986.

\bibitem[P26]{P26}
Tayfun Pay.
\newblock Bit-counting complexity classes. 
\newblock cs.CC, arXiv, 2606.04406, 2026. 

\bibitem[M21]{M21}
Marston Morse.
\newblock Recurrent geodesics on a surface of negative curvature, Trans. Amer. Math. Soc., 22, 84-100, 1921. 

\bibitem[S73]{S73}
Neil James Alexander Sloane.
\newblock Sequence A010060 (Thue–Morse sequence). The On-Line Encyclopedia of Integer Sequences. OEIS Foundation. 1973.

\bibitem[B01]{B01}
Henry Bottomley.
\newblock Sequence A059448 (The parity of the number of zero digits when n is written in binary). The On-Line Encyclopedia of Integer Sequences. OEIS Foundation. 2001.

\bibitem[T89]{T89}
S. Toda.
\newblock On the computational power of PP and Parity-P, IEEE FOCS, 514-519, 1989.

\bibitem[CH89]{CH89}
J. Y. Cai and L. A. Hemachandra.
\newblock On the power of parity polynomial time, STACS, 229-240, 1989.

\end{thebibliography}

\end{document}